\documentclass[11pt]{article}
\usepackage[utf8]{inputenc}
\usepackage[letterpaper]{geometry}
\usepackage{ltexpprt}

\usepackage[english]{babel}

\usepackage{amsmath, amssymb, amsfonts}
\usepackage[normalem]{ulem}
\newtheorem{observation}{Observation}

\usepackage{float,graphicx}

\usepackage{pdflscape}
\usepackage{xcolor}
\usepackage{hyperref}
\hypersetup{
	colorlinks,
	linkcolor={green!70!black},%
	citecolor={blue!70!black},%
	urlcolor={blue!80!black}
}

\renewcommand{\O}{\mathcal{O}}

\newcommand{\avg}{\displaystyle \mathop{\mathbb{E}}}
\newcommand{\Perm}{\mathbf{P}}
\newcommand{\A}{\mathcal{T}}

\newcommand{\N}{\mathbb{N}}

\renewcommand{\deg}{f}
\newcommand{\K}{{\cal K}}

\newcommand{\X}{{\cal X}}

\newcommand{\blue}{\color{blue}}

\begin{document}  
\title{A Tail Estimate with Exponential Decay for the Randomized Incremental Construction of Search Structures\footnote{This work was supported under the Australian Research Council Discovery Projects funding scheme (project number DP180102870).
The full version of the paper can be accessed at \protect\url{https://arxiv.org/abs/2101.04914}}
}

\author{    Joachim Gudmundsson\thanks{University of Sydney, Australia. \texttt{joachim.gudmundsson@sydney.edu.au}}  \and     Martin P. Seybold\thanks{University of Sydney, Australia.  \texttt{mpseybold@gmail.com}}}
\date{}

\maketitle

\begin{abstract}
The Randomized Incremental Construction (RIC) of search DAGs for point location in planar subdivisions, nearest-neighbor search in 2D points, and extreme point search in 3D convex hulls, are well known to take ${\cal O}(n \log n)$ expected time for structures of ${\cal O}(n)$ expected size.
Moreover, searching takes w.h.p.~${\cal O}(\log n)$ comparisons in the first and w.h.p.~${\cal O}(\log^2 n)$ comparisons in the latter two DAGs.
However, the expected depth of the DAGs and high probability bounds for their size are unknown. 

Using a novel analysis technique, we show that the three DAGs have w.h.p.
i)   a size  of ${\cal O}(n)$, 
ii)  a depth of ${\cal O}(\log n)$, and
iii) a construction time of ${\cal O}(n \log n)$.
One application of these new and improved results are \emph{remarkably simple} Las Vegas verifiers to obtain search DAGs with optimal worst-case bounds.
This positively answers the conjectured logarithmic search cost in the DAG of Delaunay triangulations [Guibas et al.; ICALP 1990] and a conjecture on the depth of the DAG of Trapezoidal subdivisions [Hemmer et al.; ESA 2012].
\end{abstract}

\paragraph*{Keywords} ~\\
Randomized Incremental Construction, Data Structures, Tail Bound, Las Vegas Algorithm

\section{Introduction}
The Randomized Incremental Construction (RIC) is one of the most successful and influential paradigms in the design of algorithms and data structures.
Its simplicity makes the method particularly useful for many, seemingly different problems that ask to compute a defined structure for a given set of objects.
The idea is to first permute all $n$ objects, uniformly at random, before inserting them, one at a time, in an initially empty structure under this order.
Treaps~\cite{SeidelA96,Vuillemin80} are a 1D example of history-based RIC that also demonstrates the algorithmic use of high probability bounds and rebuilding to maintain worst-case guarantees.

A landmark problem for RIC is computing a planar subdivision, called trapezoidation, that is induced by a set of $n$ line segments~\cite{BergCompGeo, MulmuleyBook}.
Every trapezoidation contains ${\O(n + k)}$ faces, where $k$ is the number of intersections.
Clarkson and Shor~\cite{ClarksonS89} gave the conflict-graph RIC, 
Mulmuley~\cite{Mulmuley90} gave the conflict-list RIC, and Seidel~\cite{Seidel91} gave the history-based RIC that builds the Trapezoidal Search DAG (TSD) online, each taking ${\O(n \log n + k)}$ expected time. 
The TSD is the history of trapezoidations that are created during the RIC and allows to find the trapezoid, of the current subdivision, that contains a query point.
TSDs have a worst-case size of $\Omega(n^2)$, but their expected size is $\O(n + k)$.
Searching takes w.h.p.~$\O(\log n)$ comparisons and the longest search path (search depth) is also w.h.p. $\O( \log n )$, since there are only $\O(n^2)$ different \emph{search paths} (e.g., \cite[Chapter~6.4]{BergCompGeo}).
Similar to Treaps, TSDs allow fully-dynamic updates such that, after each update, the underlying random permutation is uniformly from those over the current set of segments.
Early algorithms generalize search tree rotations
to abstract, complex structures in order to reuse the point location search and leaf level insertion algorithms~\cite{Mulmuley91a-DynShuffle}.
Simpler search and recursive top-down algorithms were described recently~\cite{BrankovicGRS20}.
The bounds on expected insertion and deletion time of both methods however require that the update entails a non-adversarial, random object.

In contrast to Treaps, fundamental questions about the reliability of RIC runtime, maintenance of small DAGs, certifying logarithmic search costs, and avenues to de-randomization, are still not completely understood.
High probability bounds for the TSD construction time are only known under additional assumptions (cf. Section~\ref{sec:rel-work}) and high probability bounds for space and logarithmic search in the DAG of 2D Delaunay triangulations and of 3D convex hulls~\cite{GuibasKS92} are unknown.

\subsection{Related Work}\label{sec:rel-work}
Guibas et al.~\cite{GuibasKS92} showed that history-based RIC for 2D Delaunay triangulations and $3$D convex hulls takes $\O(n \log n)$ expected time.
Their analysis however reveals nothing about the search comparisons in the query phase (see also~\cite[Section~$9.5$]{BergCompGeo}) and the query bound is w.h.p. $\O(\log^2 n)$~\cite[Theorem~$5.4$]{GuibasKS92}.
The authors state \emph{``We believe that this query time is actually $\O( \log n)$''} in several remarks throughout the paper (e.g. p384, p401, p411).
Works on (dynamization of) the two problems~\cite{Mulmuley91a-DynShuffle, Chan10} also have the same $\O(\log^2 n)$ query bound.

The work of Hemmer et al.~\cite{Hemmer16} shows how to turn the TSDs expected query time into a worst-case bound.
They give two, Las Vegas verifier, algorithms to estimate the \emph{search depth}.
Their exact algorithm runs in $\O(n \log n)$ expected time and their $\O(1)$-approximation runs in $\O( n \log n)$ time.
Their CGAL implementation~\cite{cgal:wfzh-a2-20b} however refrains from these verifiers and simply uses the TSD depth to trigger rebuilds, which is a readily available in RIC.
Clearly the TSD depth is an upper bound, since the (combinatorial) paths are a superset of the search paths.
However, the ratio between depth and search depth is $\Omega( n /\log n )$.
The authors conjecture that TSD depth is $\O( \log n)$ with at least constant probability (see Conjecture~$1$ in \cite{Hemmer12}).
To the best of our knowledge, the expected value of this quantity is still unknown.

The theory developed for RICs lead to a tail bound technique~\cite{MehlhornSW92, ClarksonMS93} that holds as soon as the actual geometric problem under consideration provides a certain boundedness property.
The strongest known tail bound is from Clarkson et al.~\cite[Corollary~26]{ClarksonMS93}, which states the following.
Given a function $M$ such that $M(j)$ upper bounds the size of the structure on $j$ objects.
If $M(j)/j$ is non-decreasing, then, for all $\beta>1$, the probability that the history size exceeds $\beta M(n)$ is at most $(e/\beta)^\beta/e$.
This includes the TSD size for non-crossing segments ($k=0$), but also the DAGs for 3D convex hulls and 2D Delaunay triangulations.
Assuming intersecting segments, Matou\v{s}ek and Seidel~\cite{MatousekS92} show how to use an isoperimetric inequality for permutations to derive a tail bound of $\O(n^{-c})$, given there are at least $k\geq C n \log^{15} n$ many intersections in the input (both constants $c$ and $C$ depend on the deviation threshold $\beta$).
Mehlhorn et al.~\cite{MehlhornSW92} show that the general approach can yield a tail bound of at most $1/e^{\Omega(k/n \log n)}$, given there are at least $k\geq n \log n \log \log \log n$ intersections in the input segments.

Recently, Sen~\cite{Sen19} gave tail estimates for conflict-graph RICs (cf. Chapter~3.4 in \cite{MulmuleyBook}) using Freedman's inequality for Martingales.
The work also shows a lower bound on tail estimates for the runtime, i.e. the total number of conflict-graph modifications, for computing the trapezoidation of non-crossing segments that \emph{rules out} high probability tail bounds~\cite[Section~6]{Sen19}.
In conflict-graph RIC, not only one endpoint per segment is maintained in conflict lists, but edges in a bipartite graph, over existing trapezoids and uninserted segments, that contain an edge if and only if the geometric objects intersect (see Appendix and Figure~4 in \cite{Sen19}).
Hence the lower bound construction only applies to conflict-graph RIC and does not translate to the history-based RIC. 

\begin{table}[t] \centering
	\setlength{\tabcolsep}{6pt} %
	\begin{tabular}{|l r|c|l|l |} \hline 
		Technique & & Bound & With Prob. $\geq$ & Condition \\ 	\hline
		Isoperimetric & \cite{MatousekS92}
		& $\O(n+k)$ 
		& $1-\O(1/n^{c})$
		& $k\geq C n \log^{15} n$ \\ 
		Hoeffding & \cite{MehlhornSW92} 
		& $\O(n+k)$
		& $1-1/e^{\Omega(k/n \log n)}$ 
		& $k\geq n \log n \log \log \log n$ \\
		Freedman & \cite[Lem.12]{Sen19} %
		& $\O(n+k)$
		& $1-1/e^{k/n \alpha(n)}$ 
		& $k\geq n \log n$ \\ 		
		Events & \color{red}\sout{Section~\ref{sec:tsd-size-crossing}}
		& $\O(n+k)$
		& $1 -1/e^{n+k}$
		& \\ \hline
		Hoeffding &\cite{ClarksonMS93} 
		& $\O(\beta  n)$
		& $1-(e/\beta)^\beta/e$ 
		& \\ 
		Pairwise Events & \color{red}\sout{Section~\ref{sec:tailbound}}
		& $\O(n)$
		& $1 -1/e^n$
		& \\
		\hline
	\end{tabular}
	\caption{Tail bounds for the history size of TSDs on $n$ segments. $k$ denotes the number of intersections and $\alpha(n)$ the inverse of Ackermann's function.}\label{tab:tailbounds}
\end{table}

\subsection{Contribution}
We introduce a new and direct technique to analyze the history size that fully abstracts from the geometric problem to Pairwise Events of object adjacency.
Using a matrix property of the events enables an inductive Chernoff argument, despite the lack of full independence.
The main result in Section~\ref{sec:tailbound} is a \emph{much sharper} tail estimate for the TSD size (see Table~\ref{tab:tailbounds}).
This complements the known high probability bound for the point location cost and shows that TSD construction takes w.h.p. $\O(n \log n)$ time.
Moreover, maintaining a TSD size of $\O(n)$, in the static and dynamic setting, merely adds an expected rebuild cost of $\O(1)$.

Unlike geometric Backward Analysis of the query cost, we deal with union bounds over exponential domains, inherent to our combinatorial approach, in Section~\ref{sec:path-length}.
We derive (inverse) polynomial probability bounds and show that the exponential union bound adds up to a high probability bound.
The main result in this section is that the TSD has w.h.p. $\O( \log n)$ depth, and thus confirms the conjecture of Hemmer et al.~\cite{Hemmer12,Hemmer16} with a \emph{substantially stronger} bound.

In Section~\ref{ssec:dag-triangulations}, we show that our technique allows to obtain identical bounds for size, depth, and runtime for the search DAGs of 2D Delaunay triangulations and 3D convex hulls.
This improvement answers the conjectured query time bound of Guibas et al.~\cite{GuibasKS92} affirmatively.
Additional bookkeeping during RIC allows us to track our (high probability) bound on the maximum cost of edge weighted root-to-leaf paths in the DAGs.
Hence space can be made $\O(n)$ and query time can be made $\O(\log n)$ worst-case bounds with rebuilding.

In Section~\ref{sec:tsd-size-crossing}, we extend the technique to yield improved bounds for a history-based RIC with $\omega(n^2)$ worst-case size.
We show that the TSD size is $\O(n+k)$ with probability at least $1-1/e^{n+k}$, where $k\geq 0$ is the number of intersections of the segments (see Table~\ref{tab:tailbounds}).

\section{Recap: Trapezoidal Search DAGs} \label{sec:recap-tsd}

For a set $S$ of $n$ segments in the plane, we denote by $\K(S) \subseteq \binom{S}{2}$ the set of crossings and ${k=|\K(S)|}$. %
We identify the permutations over $S$ with the set of bijective mappings to $\{1,\ldots,n\}$, i.e.
 $\Perm(S)=\{ {\pi : S \to \{1,\ldots,n\}}~|~\pi \text{ bijective}\}$.
Integer $\pi(s)$ is called the \emph{priority} of the segment $s$.

An implicit, infinitesimal shear transformation allows to assume, without loss of generality, that all distinct end and intersection points have different $x$-coordinates (e.g. Chapter 6.3 in~\cite{BergCompGeo}).
Trapezoidation $\A(S)$ is defined by emitting two vertical rays (in negative and positive $y$-direction) from each end and intersection point until the ray meets the first segment or the bounding rectangle (see Figure~\ref{fig:example-trapezoidations}).
To simplify presentation, we also implicitly move common end and intersection points infinitesimally along their segment, towards their interior.
This gives that segments have \emph{no} points in common, though there may exist additional spatially empty trapezoids in $\A(S)$.
We identify $\A(S)$ with the set of faces in this decomposition of the plane.
Elements in $\A(S)$ are trapezoids with four boundaries that are defined by at least one and at most four segments of $S$ (see Figure~\ref{fig:example-trapezoidations}).
Note that boundaries of the trapezoids in $\A(S)$ are solely determined by the set of segments $S$, irrespective of the permutation.
We will need the following notations.
Let $\gamma>1$ be the smallest constant%
\footnote{Counting trapezoids in a x-sweep shows $|\A(S)|= 1+3(n+k)$ (see \cite[p.127]{BergCompGeo} and \cite[Section~3]{Seidel1993}).}
such that
$|\A(S)| \leq (n+k)\gamma$
holds for any $S$. %
For a segment $s \in S$, let $\deg(s,S) = {\{ \Delta \in \A(S):\Delta \text{ is bounded by } s\}}$ denote the set of faces that are bounded by $s$ (i.e. {\it top, bot, left,} or {\it right}).
Let $s_i=\pi^{-1}(i)$ be the priority $i$ segment and let $S_{\leq k}=\{s_1, \ldots, s_k\}$.

The expected size of the TSD is typically analyzed by considering $\sum_{j=1}^n D_j$ where the random variable $D_j := \left| \deg(s_j, S_{\leq j})\right|$ denotes the number of faces that are created by inserting $s_j$ into trapezoidation $\A(S_{\leq j-1})$, equivalently that are removed by deleting $s_j$ from $\A(S_{\leq j})$ (see Figure~\ref{fig:example-tsd}).

\begin{figure}
	\centering
	\includegraphics[width=.4\columnwidth,page=1]{./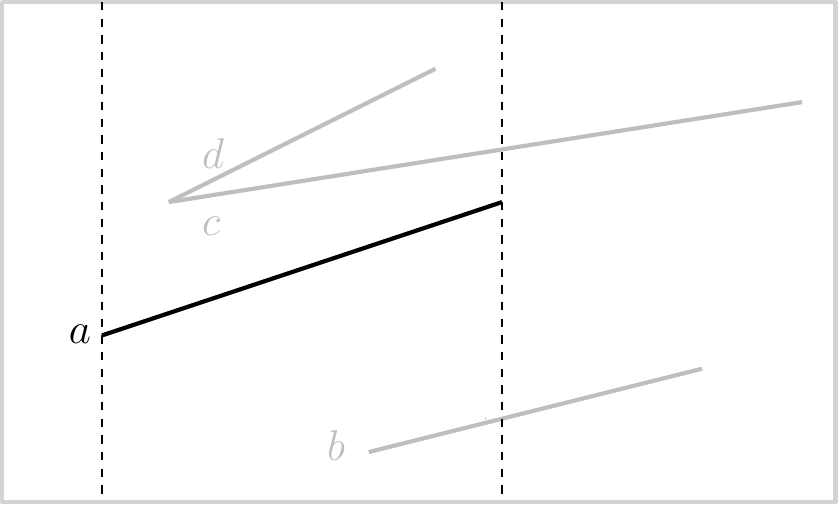}
	\includegraphics[width=.4\columnwidth,page=2]{./figs/tst-vs-tsd.pdf}\\
	\includegraphics[width=.4\columnwidth,page=3]{./figs/tst-vs-tsd.pdf}
	\includegraphics[width=.4\columnwidth,page=4]{./figs/tst-vs-tsd.pdf}
	\caption{
		Trapezoidations over the segments $S=\{a,b,c,d\}$, with $a=(a.l,a.r)$, $b=(b.l,b.r)$, $c=(c.l,c.r)$, and $d=(d.l,d.r)\}$, where $c.l=d.l$ is a common endpoint.
		$\A(\{a\})$, $\A(\{a,b\})$, $\A(\{a,b,c\})$, and $\A(\{a,b,c,d\})$ have $4$, $7$, $10$, and $13$ faces respectively (cf.  Figure~\ref{fig:example-tsd}).
	}
	\label{fig:example-trapezoidations}
\end{figure}
\begin{figure}
	\centering
    \includegraphics[width=.6\columnwidth]{./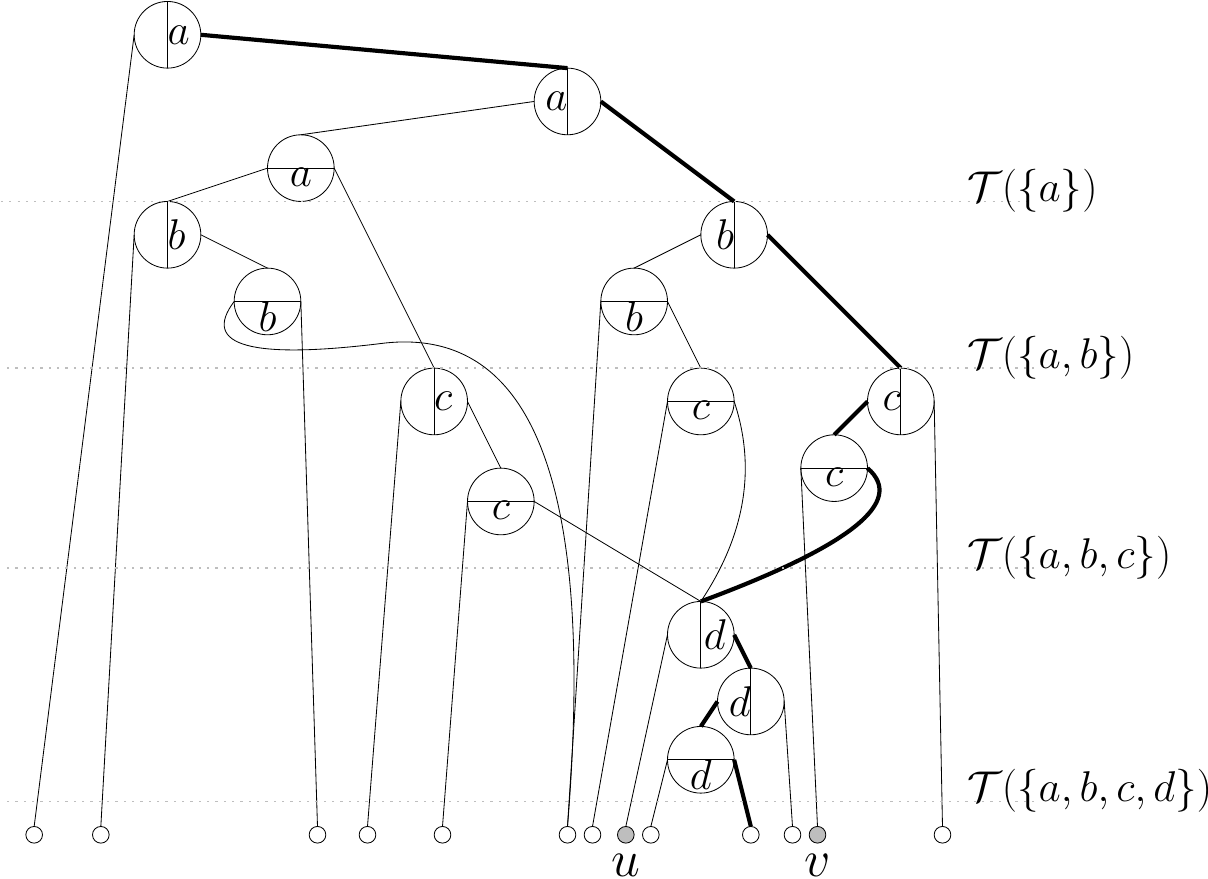}
	\caption[TSD]{TSD for the history of trapezoidations under permutation $\pi = \bigl(\begin{smallmatrix}
		a & b 	& c & d\\
		1 	& 2 	& 3 & 4 
		\end{smallmatrix}\bigr)$ 
		from Figure~\ref{fig:example-trapezoidations}.
		TSD node $v$ corresponds to the trapezoid $\Delta(v)$, which has the boundaries $\textit{top}(\Delta(v))=c$, $\textit{bot}(\Delta(v))=b$, $\textit{left}(\Delta(v))=a.r$, and $\textit{right}(\Delta(v))=b.r$ and the spatially empty $\Delta(u)$ is due to common endpoint {\it left}$(\Delta(u))=c.l=d.l=\textit{right}(\Delta(u))$.
		The path with heavy line width is not a search path, since $d.r$ is left of $a.r$.}
	\label{fig:example-tsd}
\end{figure}

Classic Backward Analysis~\cite[p.~$136$]{BergCompGeo} in this context is the following argument.
Let $S' \subseteq S$ be a fixed subset of $j$ segments, then
\begin{align*}  %
\avg_{\Perm(S)} \Big[ D_j \Big| S_{\leq j} = S'\Big] 
&=
\frac{1}{j} \sum_{s\in S'} \sum_{\Delta \in \A(S')} \chi \big( \Delta \in \deg(s,S') \big) 
\quad %
\leq 4 \frac{|\A(S')|}{j}
\end{align*}
where the binary indicator variable $\chi(\Delta \in \deg(s,S'))$ is $1$ if and only if the trapezoid $\Delta$ is bounded by segment $s$.
The equality is due to that every segment in $S'$ is equally likely to be picked for $s_j$.

For $k=0$, we have that $|\A(S')|\leq \gamma j$, regardless of the actual set $S'$, and
$\avg [D_j] \leq 4\gamma$ holds unconditionally for each step $j$.
For $k > 0$ one observes that any given crossing in $\K(S)$ is present in $S_{\leq j}$ with probability $\frac{j(j-1)}{n(n-1)}$, hence summing over the $k$ crossings gives that 
$\avg|\A(S_{\leq j})| \leq \gamma(j + k j^2/n^2)$ and
$\avg[D_j] \leq 4\gamma(1 + kj/n^2)$ and thus $\sum_j \avg[D_j] \leq 4\gamma(n + k)$.
Replacing $k$ in this bound with the number of intersection points incurs a more technical argument (see \cite[p.~46]{Seidel1993}).

Since the destruction of a face (of a leaf node) creates at most three DAG nodes, the expected number of TSD nodes is at most $12 \gamma (n+k)$.

\section{Stronger tail bounds using Pairwise Events} \label{sec:tailbound}
Let $S$ be a set of $n$ non-crossing segments throughout this section.
Segments $s$ and $s'$ are called \emph{adjacent} in $S$ if there is a face $\Delta \in \A(S)$ that is \emph{incident} to both, i.e. both define some part of the boundary of $\Delta$.
We define for each $1 \leq i < j \leq n$ an event, i.e. a binary random variable, $X_{i,j} : \Perm(S) \to \{0,1\}$ that occurs if and only if $s_i$ and $s_j$ are adjacent in $\A(S_{\leq j})$.
That is
\[
X_{i,j} = \begin{cases}
1 & \text{if } \deg(s_j, S_{\leq j}) \text{ contains a trapezoid bounded by } s_i \\
0 & \text{otherwise}
\end{cases}\quad.
\]
To simplify presentation, we place the events in a lower triangle matrix and call the set $r(j):=\{X_{i,j}:1\leq i < j\}$ the events of row $j$ and the set $c(i):=\{X_{i,j}:i<j\leq n\}$ the events of column $i$.

Imagine that the random permutation is built backwards, i.e. by successively choosing one of the remaining elements uniformly at random to assign the largest available priority value.
For every step $j$ at least one of the row events occurs, i.e. $0 < \sum_{i<j} X_{i,j} < j$, since at least one trapezoid is destroyed in step $j$ and the exact probability of the events $r(j)$ depends on the geometry of the segments $S_{\leq j}$.

Consider the events in row $j$.
Conditioned on the random permutation starting with set $S'= S\setminus\{s_{j+1},\ldots,s_n\}$, 
the experiment chooses $s \in S'$ uniformly at random and assigns the priority value $j$ to it.
Clearly the number of occurring (row) events depends on which segment of $S'$ is picked as $s_j$, as this determines the value $\deg(s_j,S')$.
Note that the choice of $s_j$ also fixes a partition of $S' =\{s_j\}\cup  A \cup N$ into those segments that are and aren't adjacent to $s_j$, the sets $A$ and $N$ respectively.
This defines a partition $S_{\leq j}=\{s_j\}\cup A_j \cup N_j$ in every backward step $j$.
Eventually $s_i$ is picked from $S_{\leq i} $, which determines the outcomes of \emph{all events} in $c(i)$.
I.e. when $s_i$ is picked from $S_{\leq i}$, the objects in the set are multicolored ($A_j$ or $N_j$ for each $j>i$) and $X_{i,j}$ occurs if and only if the pick has the respective color $A_j$.

\begin{table}\centering
	$
	X (\pi)=
	\begin{array}{|c|c|c|c|} 	\hline
	~&~&~&~~		\\ \hline
	1&~&~&~~ 	\\ \hline
	1&1&~&~~ 	\\ \hline
	0&0&1&~~		\\ \hline
	\end{array}
	\quad
	\begin{array}{ll|r}
	~&~					   	 & D_1=4 \\ 
	A_2=\{a\}	&N_2=\{\}  	 & D_2=5 \\ 
	A_3=\{a,b\}	&N_3=\{\}  	 & D_3=6 \\ 
	A_4=\{c\}	&N_4=\{a,b\} & D_4=4
	\end{array}$
	\caption{Outcome of the pairwise events and the partitions for segments $S=\{a,b,c,d\}$ and order $\pi$ from Figures~\ref{fig:example-trapezoidations} and \ref{fig:example-tsd}.}
	\label{table:events}
\end{table}

This shows for the event probability that
\begin{align} \label{eq:conditioning-uninformative}
\avg[X_{i,j}|s_{i+1},\ldots,s_n] = \avg[X_{i,j}| Y, s_{i+1},\ldots,s_n]
\end{align}
for every $Y \in c(i')$ with $i'>i$.
See Table~\ref{table:events} for an example.

Moreover, we have, for every $t>0$, the two equations
\begin{align} 
\avg[ \prod_{j>1}\exp(tX_{1,j})| s_2,\ldots,s_n] &=
\prod_{j>1} \avg[\exp(tX_{1,j})| s_2,\ldots,s_n]
\label{eq:deter-events-c1}
\\
\avg[ \prod_{j>i}\exp(tX_{i,j})| s_i,\ldots,s_n] &=
\prod_{j>i} \avg[\exp(tX_{i,j})| s_i,\ldots,s_n]~,
\label{eq:deter-events-ci}
\end{align}
where $(s_i,\ldots,s_n)$ denotes the condition that the random permutations have this suffix.
Note that for a set of events $\{B_i\}$ that are either certain or impossible, i.e. ${\normalfont \avg}[B_i]=\Pr[B_i] \in \{0,1\}$, we have that the outcome of each event is identical to its probability and thus  
$
{\normalfont \avg}[ \prod_i\exp(t B_i) ]
=
\prod_i\exp(t {\normalfont \avg}[B_i])
=
\prod_i {\normalfont \avg}[\exp(t B_i)]
$.

There is a close relation between the row events $r(j)$ and the random variable $D_j$.

\begin{lemma} \label{lem:relation-events-destroyed}
For each $\pi \in \Perm(S)$ and $j \geq 2$, we have 
$D_j(\pi)/6 \leq \sum_{i<j} X_{i,j}(\pi) \leq 3 D_j(\pi)$.
\end{lemma}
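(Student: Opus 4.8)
The plan is to first notice that the middle quantity is just a neighbour count. Since the segments of $S_{\leq j}$ other than $s_j$ are exactly $s_1,\dots,s_{j-1}$, we have $\sum_{i<j}X_{i,j}(\pi)=|A_j|$, where $A_j$ is the set of segments adjacent to $s_j$ in $\A(S_{\leq j})$ (the set from the discussion preceding Table~\ref{table:events}). So it suffices to prove $D_j\le 6|A_j|$ and $|A_j|\le 3D_j$. The upper bound is immediate: any $\sigma\in A_j$ is, by definition of adjacency, incident to a face $\Delta$ that is also incident to $s_j$, so $\Delta\in\deg(s_j,S_{\leq j})$; charge $\sigma$ to one such $\Delta$. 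A fixed face has only four boundaries, one of which is defined by $s_j$, hence at most three are defined by other segments, so $\Delta$ is charged at most three times and $|A_j|\le 3\left|\deg(s_j,S_{\leq j})\right|=3D_j$.

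For the lower bound $D_j\le 6|A_j|$ I would partition $\deg(s_j,S_{\leq j})$ into the faces $T_{\mathrm{top}}$ having $top=s_j$, the faces $T_{\mathrm{bot}}$ having $bot=s_j$, and the remaining faces $T_{\mathrm{side}}$, which are incident to $s_j$ only through one of its two endpoints. The faces of $T_{\mathrm{top}}$ tile the region directly below $s_j$ inside the $x$-range of $s_j$, hence partition it into $|T_{\mathrm{top}}|$ vertical slabs, and there are $|T_{\mathrm{top}}|-1$ interior walls between consecutive slabs. Each such wall reaches $s_j$ from below and is anchored at a segment endpoint $p$ lying below $s_j$ and strictly inside the $x$-range of $s_j$ (so $p$ is not a bounding-box corner). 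Let $\sigma$ own $p$. Then $\sigma\neq s_j$, since the endpoints of $s_j$ bound rather than lie interior to its $x$-range; $\sigma$ defines the vertical boundary of the slab on each side of the wall, so $\sigma$ and $s_j$ are both incident to that slab and $\sigma\in A_j$; and since $p$ lies below $s_j$ and $\sigma$ does not cross $s_j$, the segment $\sigma$ runs below $s_j$ over the overlap of their $x$-ranges. Writing $A_j^{\downarrow}\subseteq A_j$ for the neighbours that arise this way, each of them owns only two endpoints and so anchors at most two such walls, which gives $|T_{\mathrm{top}}|\le 2|A_j^{\downarrow}|+1$; symmetrically $|T_{\mathrm{bot}}|\le 2|A_j^{\uparrow}|+1$ for the ``upper'' neighbours $A_j^{\uparrow}$. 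Because $S$ is non-crossing, no segment runs both below and above $s_j$, so $A_j^{\downarrow}$ and $A_j^{\uparrow}$ are disjoint subsets of $A_j$, and hence $|T_{\mathrm{top}}|+|T_{\mathrm{bot}}|\le 2|A_j|+2$.

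It then remains to bound $|T_{\mathrm{side}}|$ and to rule out $A_j=\emptyset$. A face incident to $s_j$ only through an endpoint must be the unique face just to the left of the vertical wall through the left endpoint of $s_j$, or the unique face just to the right of the wall through the right endpoint of $s_j$ — no others occur, because on the far side of each such wall no segment can end at the same $x$-coordinate. Thus $|T_{\mathrm{side}}|\le 2$ and $D_j=|T_{\mathrm{top}}|+|T_{\mathrm{bot}}|+|T_{\mathrm{side}}|\le 2|A_j|+4$. Finally, for $j\ge 2$ the face of $\A(S_{\leq j-1})$ containing $s_j$ is, like every face of a non-empty trapezoidation, bounded by at least one segment, and that segment becomes adjacent to $s_j$ in $\A(S_{\leq j})$, so $|A_j|\ge 1$. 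Combining the two estimates gives $D_j\le 2|A_j|+4\le 6|A_j|$, the desired lower bound.

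The upper bound and the final counting are routine; the substantive part is the lower bound, where I anticipate two points needing care. First, one must justify from the precise definition of the $left$ and $right$ pointers of a trapezoid that the interior wall separating two consecutive $top$-slabs is genuinely anchored by the endpoint of a segment adjacent to $s_j$ — including the subcase where that segment is not the floor of either slab but one that has just terminated there. Second, and decisively, one must exploit the non-crossing hypothesis to make $A_j^{\downarrow}$ and $A_j^{\uparrow}$ disjoint; treating them merely as two subsets of $A_j$ would only yield $|T_{\mathrm{top}}|+|T_{\mathrm{bot}}|\le 4|A_j|+2$, which is too weak to reach the constant $6$.
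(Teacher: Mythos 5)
Your proof is correct and follows essentially the same route as the paper: you count the slabs above and below $s_j$ via their interior vertical walls (anchored at endpoints of neighbouring segments, two endpoints each), add the two end trapezoids incident to $s_j$ only via an endpoint, and then absorb the additive $+4$ into $6|A_j|$ using $|A_j|\geq 1$. The paper's only presentational difference is that it collects all wall-anchoring endpoints into one set $P=P^+\cup P^-$ and observes $|P|\leq 2|A_j|$ directly by a global two-endpoints-per-segment count; with this bookkeeping the disjointness of $A_j^{\downarrow}$ and $A_j^{\uparrow}$ that you single out as ``decisive'' is never needed, since even a segment contributing to both $P^+$ and $P^-$ still contributes only two points to $P$ overall (in the non-crossing setting such a segment cannot exist anyway, which is why your version also closes). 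So the non-crossing hypothesis is less load-bearing here than your account suggests; it is a consequence of your choice to bound $|P^-|$ and $|P^+|$ separately by $2|A_j^{\downarrow}|$ and $2|A_j^{\uparrow}|$, not an unavoidable ingredient.
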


\begin{proof}
	Let $S' := S_{\leq j}(\pi)$ be the segments with priority at most $j$ in $\pi$.
	Clearly every trapezoid that is incident to $s_j$ is bounded by at most three other segments, which gives the upper bound.
	For the lower bound, we first count those trapezoids of $\deg(s_j,S')$ that have $s_j$ as top or bottom boundary.
	Let $P$ be the set of endpoints that define the vertical boundaries of these trapezoids, excluding the endpoints of $s_j=(q_l,q_r)$.
	Partition $P$ into points $P^+$ above and $P^-$ below $s_j$, which blocks their vertical rays in trapezoidation $\A(S')$.
	Consider the two sets $P^+$ and $P^-$ sorted by their $x$-coordinates.
	
	Between the endpoints of $s_j$, the vertical boundaries of points in $P^+$ can define at most $|P^+|+1$ trapezoids.
	Hence $\deg(s_j,S')$ contains at most $|P|+2$ trapezoids that have $s_j$ on their top or bottom boundary.
	The remaining trapezoids of $\deg(s_j,S')$ are either bounded by $q_l$ or by $q_r$.
	There is at most one trapezoid in $\A(S')$ that has endpoint $q_l$ as right vertical boundary but
	not $s_j$ as bottom or top segment.
	The argument for $q_r$ is symmetric.
	
	Putting the bounds for all cases of trapezoids in $\deg(s_j,S')$ together and using the fact that 
	$|P|\leq 2               \sum_{i<j} X_{i,j}(\pi)$, we have
	\begin{align*}
	D_j(\pi) 
	\leq 
	(2+|P|) + 2
	&\leq
	\Big(2+ 2 \sum_{i<j} X_{i,j}(\pi) \Big) + 
	2
	\leq 6  \sum_{i<j}X_{i,j}(\pi)~.
	\end{align*}
	In the last step we used the fact that $1 \leq \sum_{i<j}X_{i,j}(\pi)$.
	\end{proof}

Note that the respective upper and lower bounds hold for \emph{every} permutation $\pi \in \Perm(S)$.
This shows that the expected number of events that occur in row $j$ is at most $3\avg[D_j]\leq 12\gamma$  (and at least $1$).
Thus ${\normalfont \avg}[\sum_{i,j} X_{i,j}]$ is in the interval $[n-1,12\gamma n]$.

Furthermore, consider the isolated event $X_{i,j}$ in row $j$.
Since the element $s_i$ is picked uniformly at random from the set $S_{\leq j-1}$, we have that its event probability is within the range
\begin{align} \label{eq:harmonic-events}
\frac{1}{j-1} ~\leq~ 
  \avg[X_{i,j}] ~\leq~ \frac{3\avg[D_j]}{j-1} 
  ~\leq~\frac{12\gamma}{j-1}.
\end{align}
Hence the events have roughly Harmonic distribution, i.e. up to bounded multiplicative distortions.

We find it noteworthy that our technique completely captures, with only one lemma, the entire nature of the geometric problem within these (constant) distortion factors of the pairwise events and thus generalizes easily to other RICs (see Section~\ref{sec:conf-spaces}).

However, due to the nature of the incremental selection process, there is a dependence between the events in $r(j)$, e.g. between $X_{i,j}$ and $\{X_{i+1,j},\ldots,X_{j-1,j}\}$.
We circumnavigate this obstacle using conditional expectations in the proof of our tail bound.

\begin{theorem}\label{thm:inductive-chernoff} \color{red}
The random variable $X := \sum_{i,j} X_{i,j}$ has an exponential upper tail, i.e., we have
$\Pr[X \geq T] \leq \exp(\avg[X] - T\ln 2)$ for all $T>0$.
\end{theorem}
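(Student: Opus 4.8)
### Proof Proposal

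The plan is to prove the tail bound via a standard exponential moment (Chernoff) argument, but structured so that the dependence between events in the same row is absorbed by conditioning on the backward-built permutation suffix. The target inequality $\Pr[X\geq T]\leq \exp(\avg[X]-T\ln 2)$ has the shape $\Pr[X\geq T]\leq e^{-tT}\avg[e^{tX}]$ with $t=\ln 2$, so it suffices to show $\avg[\exp((\ln 2) X)]\leq \exp(\avg[X])$, i.e. that the moment generating function at $t=\ln 2$ is controlled by the \emph{sum} of per-event MGF bounds. Concretely, I would aim to establish
\begin{align}\label{eq:mgf-goal}
\avg\Big[\prod_{i<j}\exp\big((\ln 2)\,X_{i,j}\big)\Big] \;\leq\; \prod_{i<j}\exp\big(\avg[X_{i,j}]\big).
\end{align}
Since each $X_{i,j}\in\{0,1\}$, one has the pointwise identity $\exp((\ln 2)X_{i,j}) = 1 + X_{i,j}$, and taking expectations of a single factor gives $\avg[\exp((\ln 2)X_{i,j})] = 1+\avg[X_{i,j}] \leq \exp(\avg[X_{i,j}])$. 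So \eqref{eq:mgf-goal} would follow if the expectation of the product were at most the product of the expectations — which is exactly what fails due to intra-row dependence, and is the crux of the argument.

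The key idea is to process the events \emph{column by column}, from $i=1$ up to $i=n-1$, peeling off columns by conditioning on the backward-constructed suffix $(s_i,s_{i+1},\ldots,s_n)$ of the permutation. I would write $X = \sum_i C_i$ where $C_i := \sum_{j>i} X_{i,j}$ is the column-$i$ sum, and bound $\avg[\exp((\ln 2)X)]$ inductively: reveal $s_n$, then $s_{n-1}$, and so on. When we condition on the full suffix $(s_i,\ldots,s_n)$, equation~\eqref{eq:deter-events-ci} from the excerpt tells us that the events in column $i$ become \emph{independent} given this conditioning — more precisely, the outcomes of all events $X_{i',j}$ with $i'<i$ are still random, but the events $X_{i,j}$ for $j>i$ factorize in the exponential. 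The plan is: first take the conditional expectation over the choice of $s_i$ from $S_{\leq i}$ given the suffix $(s_{i+1},\ldots,s_n)$; by \eqref{eq:deter-events-ci} (applied with the suffix shifted by one) the column-$i$ factor $\avg[\prod_{j>i}\exp((\ln 2)X_{i,j}) \mid s_i,\ldots,s_n]$ equals $\prod_{j>i}\avg[\exp((\ln 2)X_{i,j})\mid s_i,\ldots,s_n] = \prod_{j>i}(1+p_{i,j})$ where $p_{i,j}:=\avg[X_{i,j}\mid s_i,\ldots,s_n]$. Averaging out $s_i$ and using convexity of $x\mapsto\prod_j(1+x_j)$ — or more simply $1+p\leq e^p$ followed by linearity of expectation in the exponent — should replace this by $\prod_{j>i}\exp(\avg[X_{i,j}\mid s_{i+1},\ldots,s_n])$, and then \eqref{eq:conditioning-uninformative} lets us drop the remaining conditioning down to the unconditional $\avg[X_{i,j}]$. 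Iterating over all columns yields \eqref{eq:mgf-goal}, and then Markov's inequality applied to $\exp((\ln 2)X)$ closes the proof.

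The main obstacle — and the place where I would be most careful — is the interchange of the column-$i$ exponential factorization with the averaging over $s_i$: after conditioning on $(s_i,\ldots,s_n)$ the column-$i$ events factorize, but we still need to take the outer expectation over $s_i$ (and the earlier columns), and the column-$i$ factor is not constant in $s_i$. The clean way around this is to bound $\prod_{j>i}(1+p_{i,j}) \leq \exp(\sum_{j>i} p_{i,j})$ pointwise in the suffix, then note $\sum_{j>i} p_{i,j} = \avg[C_i\mid s_i,\ldots,s_n]$, and finally use \eqref{eq:conditioning-uninformative} together with the tower property to get $\avg[\,\avg[C_i\mid s_i,\ldots,s_n]\,] = \avg[C_i]$ appearing additively in the exponent — but one must verify that pulling successive columns out does not re-introduce dependence, which is precisely what \eqref{eq:conditioning-uninformative} guarantees since a column-$i$ event is uninformative about later columns. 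A secondary subtlety is the base case $i=1$, handled by \eqref{eq:deter-events-c1}, where conditioning on $(s_2,\ldots,s_n)$ already determines everything about column $1$. I expect the resulting bound to be exactly $\avg[\exp((\ln 2)X)]\leq\exp(\sum_{i<j}\avg[X_{i,j}]) = \exp(\avg[X])$, which is tight enough for the claimed estimate with the constant $\ln 2$.
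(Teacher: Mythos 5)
Your plan mirrors the paper's proof closely: Markov with $t=\ln 2$, the column decomposition $X=\sum_i C_i$, conditioning on permutation suffixes $(s_i,\ldots,s_n)$, deterministic factorization of each column under its own suffix (Equations~\eqref{eq:deter-events-c1}, \eqref{eq:deter-events-ci}), and $1+p\leq e^p$ with $e^t-1=1$. So the overall route is the paper's route.

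However, the step you single out as the main obstacle --- averaging over $s_i$ after the column-$i$ factorization --- is not actually resolved by what you sketch. After conditioning on $(s_i,\ldots,s_n)$, applying the pointwise bound $e^{tC_i}\leq e^{\avg[C_i\mid s_i,\ldots,s_n]}$ and the induction hypothesis for $Q_{i-1}$, what remains on passing to the condition $(s_{i+1},\ldots,s_n)$ is a term of the form
\begin{align*}
\frac{1}{i}\sum_{s_i\in S'}\exp\bigl(\avg[C_1+\cdots+C_i\mid s_i,\ldots,s_n]\bigr),
\end{align*}
a \emph{mean of exponentials}, and to keep the induction going you need this to be at most $\exp\bigl(\avg[C_1+\cdots+C_i\mid s_{i+1},\ldots,s_n]\bigr)$, the \emph{exponential of the mean}. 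The tower property you invoke only commutes $\avg$ with $\avg$, not with $\exp$; Jensen's inequality for the convex function $\exp$ gives the \emph{opposite} inequality here, so the reduction does not follow, and the inner conditional expectations $\avg[C_1+\cdots+C_i\mid s_i,\ldots,s_n]$ genuinely vary with the choice of $s_i$, so equality does not hold either. Your worry about ``re-introduced dependence'' is aimed at a harmless issue that Equation~\eqref{eq:conditioning-uninformative} does handle; the real sticking point is this convexity direction, which your proposal does not address --- and which, for what it is worth, the paper's own inductive step writes as a bare equality in its final line without a visible justification. Any complete proof needs an explicit argument for why the average-of-exponentials can be pushed down to the exponential-of-the-average at each peeling step.
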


\begin{proof}
	To leverage Equation~(\ref{eq:deter-events-c1}) and (\ref{eq:deter-events-ci}) for our events, we regroup the summation terms by column index. 
	Let $t:=\ln 2$ and $C_i := \sum_{Y \in c(i)} Y$ for each $1 \leq i < n$.
	Markov's inequality gives that
	\begin{align}
			\Pr\Big[ \sum_{i<n} C_i \geq T \Big]
		=	
		\Pr\Big[ \exp\big( t \sum_{i<n} C_i \big) \geq 	e^{tT} \Big] 
		\leq  \avg\Big[ \exp\big( t \sum_{i<n} C_i \big) \Big] / e^{tT}~.
		\label{eq:TSD-size-markov}
	\end{align}
	Defining $Q_i:=\exp(tC_1 + \ldots +tC_i)$, we will show by induction that
	${\normalfont \avg}[Q_{n-1}|s_n] \leq \exp({\normalfont \avg}[\sum_{i<n}C_i|s_n] )$ for each $s_n \in S$.
	The condition $(s_i,\ldots,s_n)$ denotes that the permutations $\Perm(S)$ are restricted to those that have this suffix of elements. 
	
	For $i=1$ and each suffix condition $(s_2,\ldots,s_n)$, we have
	\begin{align*}
		\avg[e^{tC_1}|s_2,\ldots,s_n]
		=  	%
		&\avg[\prod_{j=2}^ne^{tX_{1,j}}|s_2,\ldots,s_n] 		\\
		=  	
		& \prod_{j=2}^n \avg[e^{tX_{1,j}}|s_2,\ldots,s_n]
		\\
		=  		%
		&\prod_{j=2}^n \Big( (1-\avg[ X_{1,j}|s_2,\ldots,s_n])e^0 + \avg[ X_{1,j}|s_2,\ldots,s_n] e^t	\Big)\\
		=  		%
		&\prod_{j=2}^n \Big(1+ \underbrace{(e^t-1)}_{=1} \avg[ X_{1,j}|s_2,\ldots,s_n]	\Big)								\\
		\leq 	%
		&\exp\Big( \avg[\sum_{j=2}^n X_{1,j}|s_2,\ldots,s_n]  \Big)
		= \exp\Big( \avg[C_1 |s_2,\ldots,s_n]  \Big)
		~,
	\end{align*}
	where the second equality is due to Equation~(\ref{eq:deter-events-c1}) under the given suffix condition.
	The third equality is due to the definition of expected values, the fourth due to the distributive rule, and the fifth equality due to our choice of $t$.
	The inequality is due to the well known inequality $1+x\leq e^x$.

	For $i>1$ and each condition $(s_{i+1},\ldots,s_n)$, let $S'={S\setminus\{s_{i+1},\ldots,s_n\}}$ and we have
	\begin{align*}
		\avg\Big[ Q_i \Big| s_{i+1},\ldots,s_n   \Big] 
		 =
		&\frac{1}{i} \sum_{ s_i \in S'}
		\avg\Big[ Q_{i-1} \cdot e^{tC_i} \Big
		| s_i, s_{i+1},\ldots, s_n
		\Big]
		\\ =
		&\frac{1}{i} \sum_{ s_i \in S'}
		\avg\Big[
		\underbrace{\avg[Q_{i-1}|c(i), s_i,\ldots,s_n]}_{=\avg[Q_{i-1}|s_i,\ldots,s_n]} 
		e^{tC_i} 
		\Big| s_i,\ldots,s_n   \Big] 	
		\\* \leq
		&\frac{1}{i} \sum_{ s_i \in S'}
		\exp(\avg[ C_1+\ldots+C_{i-1}	| s_i,\ldots,s_n]
		\cdot
		\underbrace{\avg[e^{tC_i} | s_i,\ldots,s_n ]}_{
			\leq \exp(\avg[C_i| s_i,\ldots,s_n])}
		\\
		\leq
		&\frac{1}{i} \sum_{ s_i \in S' }
		\exp(\avg[ C_1+\ldots+C_{i}	| s_{i},\ldots,s_n]
		\\ 
		\color{red} \leq
		& \color{red} \exp(\avg[ C_1+\ldots+C_{i}	| s_{i+1},\ldots,s_n]).
		 \llap{\sout{\phantom{$\leq \exp(\avg[ C_1+\ldots+C_{i}	| s_{i+1},\ldots,s_n]).$}}}
	\end{align*}
	The first equality is due to that every element of $S'$ is equally likely to be picked for $s_{i}$.
	The second equality is due to the `law of total expectation'.
	The third equality is due to a property of our events, see Equation~(\ref{eq:conditioning-uninformative}).
	The resulting terms are bounded by the induction hypothesis and analogously to the case $i=1$, but using Equation~(\ref{eq:deter-events-ci}) for the events $c(i)$ instead.
	{\color{red} \sout{This concludes the induction.
	
	Since
	${\normalfont\avg}[Q_{n-1}]=\frac{1}{n}\sum_{s_n \in S} {\normalfont\avg}[Q_{n-1}|s_n]$,
	we have that
	$ {\normalfont\avg}[Q_{n-1}] \leq \exp\big( {\normalfont \avg}[\sum_{i<n}C_i] \big)$
	and the result follows from (\ref{eq:TSD-size-markov}).}}
\end{proof}

Using the upper bound from Lemma~\ref{lem:relation-events-destroyed}, we have 
${\normalfont \avg}[\sum_{i,j}X_{i,j}] \leq 12\gamma n$.
Since the lower bound of the lemma holds for every permutation, we have for all $T>0$ that 
$\Pr[ T \leq \sum_j D_j ] \leq \Pr[ T \leq 6 \sum_{i,j} X_{i,j} ]$.
Hence, choosing $T = \beta n$ with a sufficiently large constant $\beta$ gives the following result.
\begin{corollary}\label{cor:tsd-size-non-crossing}
	The TSD size is $\O(n)$ with probability at least $1-1/e^n$.
\end{corollary}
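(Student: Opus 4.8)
The plan is to combine Theorem~\ref{thm:inductive-chernoff} with the deterministic comparison between $\sum_j D_j$ and $\sum_{i,j} X_{i,j}$ supplied by Lemma~\ref{lem:relation-events-destroyed}, and then to translate a tail bound on the random variable $X$ into a tail bound on the number of TSD nodes.

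First I would recall that the number of TSD nodes is, up to a constant factor, the total number of trapezoids ever created, which is $4 + \sum_{j=2}^n (\text{nodes created in step } j)$; since destroying a leaf face creates at most three new nodes and step $j$ destroys $D_j$ faces, the TSD size is bounded by a constant times $\sum_{j} D_j$ plus $O(1)$. Next, the lower bound of Lemma~\ref{lem:relation-events-destroyed}, namely $D_j(\pi) \le 6 \sum_{i<j} X_{i,j}(\pi)$ for \emph{every} permutation $\pi$, lets me dominate $\sum_j D_j$ pointwise by $6X = 6\sum_{i,j}X_{i,j}$. Therefore for any threshold $T$ we have the pointwise implication $\sum_j D_j \ge 6T \Rightarrow X \ge T$, so $\Pr[\sum_j D_j \ge 6T] \le \Pr[X \ge T]$.

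Then I would apply Theorem~\ref{thm:inductive-chernoff}: $\Pr[X \ge T] \le \exp(\avg[X] - T\ln 2)$. By the upper bound of Lemma~\ref{lem:relation-events-destroyed} together with $\avg[D_j]\le 4\gamma$ (from the Backward Analysis recap, in the $k=0$ case), we get $\avg[X] = \avg[\sum_{i,j}X_{i,j}] \le 3\sum_j \avg[D_j] \le 12\gamma n$. Choosing $T = cn$ for a constant $c$ large enough that $12\gamma - c\ln 2 \le -1$ — e.g. $c = (12\gamma + 1)/\ln 2$ — yields $\Pr[X \ge cn] \le \exp(12\gamma n - cn\ln 2) \le e^{-n}$. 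Unwinding the pointwise domination, $\Pr[\sum_j D_j \ge 6cn] \le e^{-n}$, and hence the TSD has $O(n)$ nodes with probability at least $1 - 1/e^n$, with the hidden constant being a fixed multiple of $6c = 6(12\gamma+1)/\ln 2$.

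There is no real obstacle here: all the heavy lifting — handling the dependence among the pairwise events via the columnwise regrouping and the inductive conditional-expectation argument — is already done inside Theorem~\ref{thm:inductive-chernoff}, and the geometry is entirely encapsulated in Lemma~\ref{lem:relation-events-destroyed}. The only points that need a little care are (i) making sure the constant $c$ is chosen so that the exponent is at most $-n$ rather than merely negative, and (ii) keeping track of the factor-$6$ (from the lemma) and factor-$3$ (from leaf destruction creating three nodes) so that the final statement is genuinely about TSD \emph{nodes} and not just about $\sum_j D_j$. Both are routine bookkeeping, so the corollary follows immediately.
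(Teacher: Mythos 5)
Your proof is correct and follows essentially the same route as the paper: dominate $\sum_j D_j$ pointwise by $6\sum_{i,j}X_{i,j}$ via the lower half of Lemma~\ref{lem:relation-events-destroyed}, bound $\avg[X]\le 12\gamma n$ via the upper half and $\avg[D_j]\le 4\gamma$, and then apply Theorem~\ref{thm:inductive-chernoff} with a threshold $T=\Theta(n)$. The only detail you add beyond the paper's terse three-sentence argument is the explicit choice $c=(12\gamma+1)/\ln 2$ and the final factor-of-three from trapezoid destructions to DAG nodes, both of which are routine bookkeeping the paper leaves implicit.
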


This complements the known high probability bound for the point location cost\footnote{Cf.~\cite[Chapter~6.4]{BergCompGeo} and \cite[Lemma~3.1.5 and Theorem~3.1.4]{MulmuleyBook}.} and shows that the TSD of non-crossing segments has, with \emph{very high probability}, size $\O(j)$ after every insertion step $j$.
Since the RIC time for the TSD of non-crossing segments solely entails point location costs and search node creations, we have shown the following statements.

\begin{corollary}
	The Randomized Incremental Construction of a TSD for $n$ non-crossing segments takes w.h.p. $\O(n \log n)$ time.
\end{corollary}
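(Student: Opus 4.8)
The plan is to bootstrap from the high-probability size bound of Corollary~\ref{cor:tsd-size-non-crossing} together with the known high-probability bound on the point location cost, and to argue that the total RIC time is (up to constants) the sum of these two contributions. Concretely, the construction of a TSD for $n$ non-crossing segments does two kinds of work in step $j$: it locates the endpoints of the new segment $s_j$ among the current subdivision $\A(S_{\leq j-1})$ by walking root-to-leaf in the partial TSD, and then it splits the $D_j$ conflicting trapezoids and wires up the $\O(D_j)$ new nodes. The first I would bound by the point location search cost in the TSD on $j-1$ segments, which is w.h.p.\ $\O(\log j) = \O(\log n)$ per endpoint (the footnote's reference to \cite[Chapter~6.4]{BergCompGeo} gives exactly this); the second is $\O(D_j)$, and $\sum_j D_j$ is the TSD size up to a constant, which by Corollary~\ref{cor:tsd-size-non-crossing} is $\O(n)$ w.h.p.

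First I would make the decomposition of the running time precise: write $\mathrm{Time} = \sum_{j=1}^n (L_j + c\, D_j)$ where $L_j$ is the cost of the two point-location walks performed in step $j$ and $c$ is an absolute constant absorbing the per-trapezoid split work and node creation. Then I would invoke the standard backward-analysis fact (as cited) that for each fixed $j$, and in fact for the whole construction, the search path lengths are w.h.p.\ $\O(\log n)$; since there are $n$ insertion steps, a union bound over the $n$ steps (and the two endpoints per step) still leaves a high-probability event on which every $L_j = \O(\log n)$, hence $\sum_j L_j = \O(n\log n)$. Separately, on the high-probability event of Corollary~\ref{cor:tsd-size-non-crossing} we have $\sum_j D_j = \O(n)$, so $c\sum_j D_j = \O(n) = \O(n\log n)$. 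Intersecting the two high-probability events via one more union bound gives that $\mathrm{Time} = \O(n\log n)$ w.h.p., which is the claim.

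I do not expect a serious obstacle here — the statement is essentially a corollary of what precedes it — but the one point that needs care is the distinction between the \emph{search depth} (length of an actual point-location path for a query that is a real endpoint of some $s_j$) and the combinatorial \emph{depth} of the TSD, which Section~\ref{sec:path-length} handles and which is $\Omega(n/\log n)$ in the worst case. The RIC only ever walks search paths, never arbitrary combinatorial paths, so it is the former bound I must use; I would state this explicitly and cite the classical w.h.p.\ search-cost result rather than anything about TSD depth. A secondary subtlety is that the w.h.p.\ search-cost bound is usually phrased for the \emph{final} TSD on all $n$ segments, whereas here I need it for the intermediate structure on $j-1$ segments at the moment $s_j$ is inserted; this is handled by the same backward-analysis argument applied with parameter $j$, and since $\log j \le \log n$ it only costs a union bound over $n$ steps, which a standard choice of constants in the w.h.p.\ bound absorbs.
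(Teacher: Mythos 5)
Your proposal is correct and matches the paper's (very terse) argument: the paper simply observes that the RIC time for non-crossing segments consists solely of point location costs and search node creations, bounds the former by the classical w.h.p.\ search-cost result (cited in the footnote), and bounds the latter by Corollary~\ref{cor:tsd-size-non-crossing}. Your decomposition $\mathrm{Time}=\sum_j(L_j+c\,D_j)$ and the two union bounds are exactly the intended filling-in of that one-sentence proof, and your closing caveat about using the \emph{search depth} (not the combinatorial depth) is the right point to flag.
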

\begin{corollary}
	The TSD size for non-crossing segments can be made deterministic $\O(n)$ with `rebuild if too large' by merely increasing the expected construction time by an additive constant.
\end{corollary}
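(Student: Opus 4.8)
The plan is the standard Las~Vegas ``restart'' construction, with Corollary~\ref{cor:tsd-size-non-crossing} supplying the per-attempt success probability. First I would extract from (the proof of) Corollary~\ref{cor:tsd-size-non-crossing} an explicit constant $c>0$ such that the event $\mathcal{E}$ that the RIC produces a TSD with more than $cn$ nodes satisfies $\Pr[\mathcal{E}]\leq 1/e^n$; concretely, Theorem~\ref{thm:inductive-chernoff} bounds $\Pr[X\geq T]$, and Lemma~\ref{lem:relation-events-destroyed} together with the observation from Section~\ref{sec:recap-tsd} that destroying a face spawns at most three DAG nodes turns this into a bound on the node count, which fixes $c$ in terms of $\gamma$. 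The algorithm then simply counts the created nodes during the incremental construction and, once this count exceeds $cn$ (it suffices to test only once, after the run finishes; aborting as soon as the count hits $cn$ only helps), discards the structure and restarts the RIC on a freshly and independently sampled permutation of $S$; otherwise it returns the TSD. By construction the returned structure has at most $cn=\O(n)$ nodes, and every attempt uses a permutation that is uniform over $\Perm(S)$.

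For the running time, let $Z$ be the (random) time of one full run, so $\avg[Z]=\O(n\log n)$ by the analysis recalled in Section~\ref{sec:recap-tsd}, and let $T$ be the total time of the rebuilding algorithm. Since a failed attempt is followed by an independent fresh attempt, conditioning on the first attempt gives $\avg[T]=\avg[Z]+\Pr[\mathcal{E}]\cdot\avg[T]$, hence $\avg[T]=\avg[Z]/(1-\Pr[\mathcal{E}])$, and therefore
\[
\avg[T]-\avg[Z]=\avg[Z]\cdot\frac{\Pr[\mathcal{E}]}{1-\Pr[\mathcal{E}]}\leq \O(n\log n)\cdot\frac{1/e^n}{1-1/e^n}=\O\!\left(\frac{n\log n}{e^n}\right)=\O(1),
\]
since $n\log n/e^n$ is bounded (indeed $o(1)$). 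As the node counter adds only $\O(1)$ bookkeeping per created node, the rebuilding variant's expected construction time exceeds that of the plain RIC by at most an additive constant, as claimed.

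I do not expect a genuine obstacle here: the exponentially small failure probability from Corollary~\ref{cor:tsd-size-non-crossing} dominates even a crude fixed-polynomial worst-case bound on the cost of a discarded run, so the geometric-series bookkeeping is routine. The single point that warrants a line of care is that the tail we control in Theorem~\ref{thm:inductive-chernoff} is that of $X=\sum_{i,j}X_{i,j}$, whereas the ``too large'' test is on the number of DAG nodes; the gap between the two is only a constant factor (Lemma~\ref{lem:relation-events-destroyed} plus the factor-$3$ node-creation bound), which is absorbed into the choice of $c$.
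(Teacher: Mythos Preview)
Your proposal is correct and is precisely the standard Las~Vegas restart argument the paper has in mind; the paper itself states this corollary without proof, treating it as an immediate consequence of the exponential tail bound (cf.\ the remark in Appendix~\ref{ssec:background-on-ric} that a high-probability tail estimate makes the expected rebuild cost a constant). Your derivation $\avg[T]-\avg[Z]=\avg[Z]\cdot\Pr[\mathcal{E}]/(1-\Pr[\mathcal{E}])=\O(n\log n)\cdot e^{-n}=\O(1)$ and the remark that Lemma~\ref{lem:relation-events-destroyed} plus the factor-$3$ node-creation bound translate the tail on $X$ into one on the node count are exactly what is needed to make the corollary explicit.
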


\section{Depth in the History DAG is w.h.p. logarithmic}     \label{sec:path-length}
For the TSD and other RICs (cf. Section~\ref{sec:conf-spaces}), the ubiquitous argument shows that the \emph{search path} to an arbitrary, but fixed, point has with high probability logarithmic length.
Since there are only $\O(n^2)$ search paths, the high probability bound is strong enough to address each of them in a union bound (see e.g.~\cite[Chapter~6]{BergCompGeo}).
However, a DAG on $n$ vertices of degree at most two may well contain $\Omega(3^{n/2})$ different \emph{combinatorial paths} (see Figure~\ref{fig:many-paths}), hence the same argument cannot be used to obtain a high probability bound.
Our technique allows to derive high probability bounds for this problem and thus gives \emph{remarkably simple} Las Vegas verifiers using the length of a longest combinatorial path.
We first introduce the method for the TSD of non-crossing segments and discuss modifications for the RIC of Delaunay Triangulations and Convex Hull in Section~\ref{sec:conf-spaces}.

Each root-to-leaf path in the TSD visits a sequence of `full region' nodes $(u_1, \ldots, u_m)$, i.e. those nodes whose associated trapezoids $\Delta(u_i)$ are actual faces of the trapezoidation $\A(S_{\leq j})$ for some step $j\in \{1,\ldots,n\}$ (see Figure~\ref{fig:example-tsd}).
The length of the sequence of face transitions, is within a factor of three of the path length since a face destruction inserts at most three edges in the TSD to connect a trapezoid of $\A(S_{\leq i-1})$ with one in $\A(S_{\leq i})$.
We are interested in an upper bound on the number of face-transitions that lead from the trapezoidation $\A(S_{\leq 1})$ to a face of $\A(S_{\leq n})$.

Given such a trapezoid $\Delta$, we call $b(\Delta):=\min\{\pi(s):s \text{ is a boundary of } \Delta\}$ the boundary priority of the trapezoid.
For a sequence of trapezoids 
$
(\Delta_1, \Delta_2, \ldots, \Delta_m)
$
on a root-to-leaf path, the respective sequence of boundary priority values $b(\Delta_\eta)$ is monotonically non-decreasing. 
Let $1=b_0 < b_1 < b_2 < \ldots < b_\ell$ be their sequence of distinct values. %
The number of trapezoids on this path that have the boundary priority value $b_i$ is at most 
\[
\Big| \big\{ \Delta_\eta : b(\Delta_\eta) = b_i \big\} \Big|
\leq 
\sum_{~b_i < j \leq b_{i+1}}  X_{b_i,j}
\quad,
\] 
since the destruction of a trapezoid necessitates that the segment with priority $b_i$ is adjacent to the segment $s_j$ that causes the destruction.
Moreover, event $X_{b_i,b_{i+1}}$ needs to occur if the sequence of priority values stems from a sequence of face transitions in the TSD, i.e. only if $X_{b_i,b_{i+1}}$ occurs we can have a face transition from one with boundary priority $b_i$ to $b_{i+1}$.
We call a sequence of indices \emph{(geometrically) feasible} on the permutation $\pi \in \Perm(S)$ if those specified events occur (e.g.~Figure~\ref{fig:sequence}).
In the example of Table~\ref{table:events}, only the sequences $(1,2), (1,2,3), (1,2,3,4),(1,3)$ and $(1,3,4)$ are feasible on the permutation.
Note that this also means that any feasible sequence $(b_0,\ldots,b_\ell)$, always has at least $\ell$ occurring events in its truncated column sums, that is
\[
\ell \leq 
m \leq \sum_{i=1}^{\ell}  \sum_{~b_{i-1} < j \leq b_{i}}  X_{b_{i-1},j} 
                         +\sum_{~b_\ell < j \leq n}  X_{b_\ell,j} 
\quad.
\]

To simplify notation in this section, let $\delta$ be the smallest integer such that Equation~(\ref{eq:harmonic-events}) turns into
${\normalfont \avg}[X_{i,j}]\leq\frac{12\gamma}{j-1}\leq \delta/j$ for all $j>1$ (i.e. $\delta:=\lceil 12\gamma2\rceil$).
Thus the expected value of the truncated column sum $\mu_{x,y} :={\normalfont \avg}[ \sum_{x < j \leq y}  X_{x,j}]$ is at most
$
\delta \big( H_y - H_x \big)
$,
where $H_n$ denotes the $n$-th Harmonic number.
Summing over the expectation bounds of the sequence's parts, we have for the number of trapezoids $m$ along this sequence that $m/\delta$ is at most
\begin{align}   \label{eq:bound-faces-of-sequence}
    \sum_{i=1}^\ell \Big( H_{b_{i}}-H_{b_{i-1}} \Big)
+   \Big( H_{n}-H_{b_\ell} \Big) 
= H_n -H_{b_0}~.
\end{align}
Note that this bound does not depend on the actual sequence of boundary priority values $\{b_i\}$,
hence the expected number of face transitions of \emph{any} sequence is $\O(\ln n)$.

\subsection{How likely are long combinatorial paths?}

\begin{figure}[]
    \begin{minipage}[t]{.5\textwidth} \centering
        \includegraphics[height=4.5cm]{./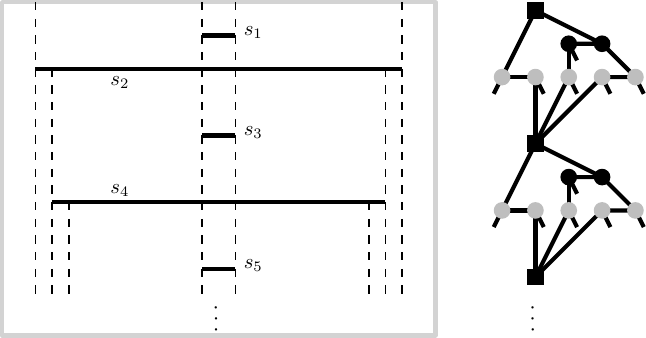}    
        \caption{Insertion order for non-crossing segments that results in a TSD with depth $\Omega(n)$ and $\Omega(3^{n/2})$ root-to-leaf paths.}
        \label{fig:many-paths}
    \end{minipage}\hfill
    \begin{minipage}[t]{.45\textwidth}\centering
        \includegraphics[height=4.5cm]{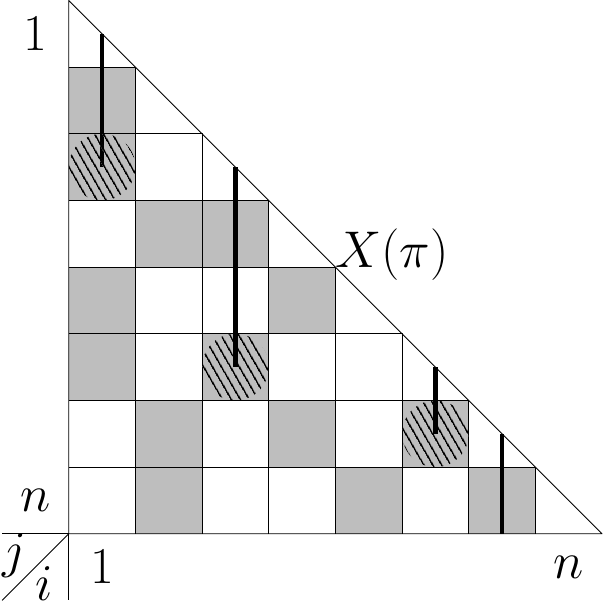}
        \caption{Example of absent events (white) and occurring events (gray). The sequence $\sigma=(1,3,6,7)$ is feasible since $X_{1,3},X_{3,6}$, and $X_{6,7}$, occur (tiled circles).
        A total of six events occur on $\sigma$, i.e. $\sigma(\pi)=6$.}
        \label{fig:sequence}
    \end{minipage}
\end{figure}

We draw the random matrix $C_{i,k}:= \sum_{i<j\leq k} X_{i,j}$ from the probability space.
From the $n\times n$ elements of $C$, the diagonal and upper triangle entries are all zero.
Note that in $C$, the values within a column heavily depend on those of the rows above.
If we find that all matrix elements of $C$ are within a constant factor, say $\beta>1$, of their expected value, we immediately have that the bound in Equation~(\ref{eq:bound-faces-of-sequence}) holds for any sequence of column indices, i.e. \emph{all} TSD paths regardless of their number.
Establishing an upper bound on
$
\Pr\left[  \exists ~ 1\leq i < k \leq n ~:~ C_{i,k} > \beta \mu_{i,k} \right]
$
which, if at most constant, would bound \emph{any} sequence of column indices and resolve the conjecture of Hemmer et al.~\cite{Hemmer16}.
However, since the events are binary, it is impossible to show strong concentration within a \emph{constant factor} of their expected values for \emph{each entry} of $C$ (e.g. $1/\avg[X_{i,i+1}]=\Omega(i)$).
Instead, our direct approach uses a special independence property for those sets of events that are given by a fixed sequence and that long index sequences are very unlikely feasible over $\Perm(S)$.

Let $I(n)$ be the set of all monotonic index sequences over $\{1,\ldots,n\}$ that start with $1$, i.e.
$$
I(n) := \Big\{ (b_0,b_1,\ldots,b_\ell) : 1=b_0 < b_1 < b_2 < \ldots < b_\ell \leq n \Big\}
\quad \text{, thus } |I(n)|=2^{n-1}~.
$$
For each sequence 
    $\sigma=(b_0,b_1,\ldots,b_\ell) \in I(n)$, let 
    ${\cal X}(\sigma):=\{X_{b_0,2},\ldots,X_{b_0,b_1},X_{b_1,b_1+1},\ldots,X_{b_\ell,n}\}$ 
be the $n-1$ events of the sequence.
Each sequence $\sigma=(b_0,b_1,\ldots,b_\ell)$ is associated to the two random variables $\sigma, \sigma' : \Perm(S) \to \N$ that map permutation $\pi$ to 
\begin{align*}
\sigma(\pi) :=	\sum_{Z \in \X(\sigma)} Z(\pi)      
    &&\text{and}&&  {\sigma'(\pi)}:= \sigma(\pi) - \sum_{i=1}^\ell X_{b_{i-1},b_i}(\pi)  
\quad.
\end{align*}
Note that $\avg[\sigma] \leq \delta H_n $ for any $\sigma \in I(n)$, due to Equation~(\ref{eq:bound-faces-of-sequence}).
Let $\beta>1$ be a sufficiently large constant, we define for each sequence $\sigma \in I(n)$ three events
\begin{align*}
F_\sigma(\pi)&\iff\textstyle\bigwedge_{(b_i,b_{i+1}) \in \sigma}X_{b_i,b_{i+1}}(\pi) 
&&
~&B_\sigma(\pi)\iff\sigma(\pi)\geq\beta\ln n~~~\\
~&~ ~&&~
~&B'_\sigma(\pi)\iff\sigma'(\pi)\geq\beta\ln n~.
\end{align*}
That is, $F_\sigma$ occurs if sequence $\sigma$ is feasible on the permutation $\pi$.
Note that $\avg[\sigma|F_\sigma] \geq |\sigma|$, 
for a sequence $\sigma=(b_0,b_1,\ldots,b_\ell)$ of length $|\sigma|:=\ell$.
To circumnavigate this dependency, we also use the events $B'_\sigma$  that drop the feasibility events from the summation and thus have $\Pr[B'_\sigma] \leq \Pr[B_\sigma]$.

\begin{lemma}
    For each sequence $\sigma \in I(n)$, the events in $\X(\sigma)$ are {\blue roughly independent, i.e.
    for each $Y \subseteq \X(\sigma)$ we have $\Pr[\bigwedge_{Z\in Y}Z]\leq \prod_{Z\in Y}\delta / \rho(Z)$, where $\rho(X_{i,j})=j$ denotes the row index of an event. }
\end{lemma}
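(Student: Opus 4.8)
The plan is to exploit the backward construction of the random permutation together with the "multicolor pick" description of how the events in a column are resolved. The key structural fact, already recorded in the excerpt, is that the events of $\X(\sigma)$ all sit in \emph{distinct columns}: for a sequence $\sigma = (b_0, b_1, \ldots, b_\ell)$, the events $\{X_{b_0,2},\ldots,X_{b_0,b_1}\}$ live in columns $2,\ldots,b_1$ (all with row index $b_0$), then $\{X_{b_1,b_1+1},\ldots,X_{b_1,b_2}\}$ live in columns $b_1+1,\ldots,b_2$ (row index $b_1$), and so on. In total $\X(\sigma)$ contains exactly one event $X_{i(\eta),\eta}$ for each column $\eta \in \{2,\ldots,n\}$, where the row index $i(\eta)$ is the unique $b_t$ with $b_t < \eta \leq b_{t+1}$. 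So the first step is to spell out this bijection between $\X(\sigma)$ and the column indices $\{2,\ldots,n\}$, and to observe that each such event $X_{i(\eta),\eta}$ is, by definition, determined by which element of $S_{\leq \eta}$ is picked as $s_\eta$ in the backward process — namely, it occurs iff that pick lies in $A_\eta$ (the segments adjacent to $s_\eta$) \emph{and} that pick happens to be the (already fixed) segment $s_{i(\eta)}$.

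Second, I would run the same backward-selection argument as in the proof of Theorem~\ref{thm:inductive-chernoff}, but now tracking the joint law of the family $\{X_{i(\eta),\eta} : \eta = 2,\ldots,n\}$ rather than an exponential moment. Condition on the suffix $(s_{\eta+1},\ldots,s_n)$; this fixes $S' = S_{\leq \eta}$ and hence the partition $S' = \{s_\eta\}\cup A_\eta \cup N_\eta$ once $s_\eta$ is chosen. The crucial point — which is exactly Equation~(\ref{eq:conditioning-uninformative}) — is that conditioning on events in \emph{earlier-indexed} columns $c(i')$ with $i' > i(\eta)$ does not change the conditional probability of $X_{i(\eta),\eta}$: these earlier columns correspond to picks of segments with larger priority, which are already absorbed into the suffix condition, and the pick of $s_\eta$ from $S_{\leq \eta}$ is uniform and independent of how the later picks will color the elements of $S_{\leq \eta-1}$. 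Concretely, for the ordered family one proves $\Pr[\bigwedge_{\eta} X_{i(\eta),\eta} = x_\eta] = \prod_{\eta} \Pr[X_{i(\eta),\eta} = x_\eta]$ by peeling off columns from $\eta = n$ down to $\eta = 2$, at each step using that the conditional probability of the $\eta$-th event given the suffix $(s_{\eta+1},\ldots,s_n)$ and given the (already-revealed) outcomes in columns $>\eta$ equals its probability given the suffix alone. This is the analogue of how Equations~(\ref{eq:deter-events-c1}) and (\ref{eq:deter-events-ci}) were used, and it gives full mutual independence of the whole family $\X(\sigma)$, hence of every subfamily $Y \subseteq \X(\sigma)$.

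Third, with mutual independence of $\X(\sigma)$ in hand, the stated conclusion $\Pr[\bigwedge_{Z \in Y} Z] = \prod_{Z\in Y}\Pr[Z]$ for $Y \subseteq \X(\sigma)$ is immediate (marginalize the independent joint law over the complementary events). The main obstacle I expect is the bookkeeping in the second step: one must be careful that "conditioning on column $i'$ with $i' > i(\eta)$" is genuinely captured by (\ref{eq:conditioning-uninformative}), i.e. that every event of $\X(\sigma)$ appearing in a column to the \emph{right} of $\eta$ has row index strictly greater than $i(\eta)$ — which holds precisely because the row indices along $\sigma$ are nondecreasing in the column index, so columns $>\eta$ carry row indices $\geq i(\eta)$, with equality only for columns in the same block as $\eta$, and those are handled by the within-block independence already shown in the style of (\ref{eq:deter-events-ci}). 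So the clean way to organize it is: first establish independence \emph{within} each block (fixed row $b_t$, columns $b_t+1,\ldots,b_{t+1}$) exactly as in the $i=1$ case of Theorem~\ref{thm:inductive-chernoff}; then glue the blocks together, from the top block down, using (\ref{eq:conditioning-uninformative}) to argue that revealing all of a later block leaves the conditional law of an earlier block's events unchanged; a routine induction on the number of blocks then yields independence of the full family.
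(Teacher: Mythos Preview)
Your block-by-block organization in the third paragraph is correct and is essentially the paper's proof: the paper peels from $Y$ the event $X_{i,j}$ with $j$ maximal, conditions on the suffix $(s_i,\ldots,s_n)$ --- starting at the \emph{first} index $i$, not at $j+1$ --- observes that $X_{i,j}$ is then deterministic, whence $\Pr[Y',X_{i,j}\mid s_i,\ldots,s_n]=\Pr[Y'\mid s_i,\ldots,s_n]\cdot\Pr[X_{i,j}\mid s_i,\ldots,s_n]$, and iterates.

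Do note that your second paragraph contains a slip that you yourself flag and repair later: $X_{i(\eta),\eta}$ is \emph{not} determined by the choice of $s_\eta$ from $S_{\leq\eta}$. Given only the suffix $(s_{\eta+1},\ldots,s_n)$ and the pick $s_\eta$, you know the set $A_\eta$, but the event asks whether $s_{i(\eta)}\in A_\eta$, and $s_{i(\eta)}$ has not yet been chosen in the backward process (since $i(\eta)<\eta$). The event becomes deterministic only once you condition on $(s_{i(\eta)},\ldots,s_n)$, exactly as in Eq.~(\ref{eq:conditioning-uninformative}); this is precisely what your final ``clean'' block-wise formulation does, and it coincides with the paper's choice of suffix.
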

{\blue
\begin{proof}  

    Let $X_{i,j} \in Y$ with $j$ minimal and $Y' = Y\setminus\{X_{i,j}\}$.
    To prove $\Pr[X_{i,j}, Y'] \leq (\delta/j) \Pr[Y'] $, we condition on the subsets $S' \in \binom{S}{j}$ and use that $\Pr[X_{i,j}~|~S_{\leq j}=S']\leq \delta/j$ for each $S'$, see~(\ref{eq:harmonic-events}).
    That is
\begin{align*}
    \Pr[X_{i,j},~Y'] 
&= 
    \frac{1}{\binom{n}{j} (n-j)!} \sum_{(s_{j+1},\ldots,s_n)} 
    \Pr[X_{i,j},~Y'~|~s_{j+1},\ldots,s_n ]
\\
\text{``Total Probability:''}\quad
&= 
    \frac{1}{\binom{n}{j} (n-j)!}
    \sum_{(s_{j+1},\ldots,s_n)} 
     \underbrace{\Pr[X_{i,j}~|~Y',~s_{j+1},\ldots,s_n]}_{= \Pr[X_{i,j}~|~s_{j+1},\ldots,s_n] ~\leq~ \delta/j} \cdot \Pr[Y'~|~s_{j+1},\ldots,s_n] 
\\
&\leq 
    \frac{(\delta/j)}{\binom{n}{j} (n-j)!} 
    \sum_{(s_{j+1},\ldots,s_n)} 
    \Pr[Y'~|~s_{j+1},\ldots,s_n] = (\delta/j)\Pr[Y']
      \quad,
\end{align*}
as required.
The bound uses that the choices in the sampling without replacement are independent.
Specifically, in any set $S' =S \setminus \{s_{j+1}, \ldots, s_n\}$, the choice of $s_j$ and $s_i$ are unaffected by adjacency to $s_{j+1}$, i.e. sampling the elements from $S'$ is regardless of them having color $A_{j+1}$ or $N_{j+1}$.
\end{proof}

From this lemma, we have for each sequence $\sigma=(b_0,\ldots,b_\ell) \in I(n)$ that
\begin{align}
    \Pr\left[F_{(b_0,\ldots,b_\ell)}\right]
    &\leq 
{ %
\textstyle\prod_{i=1}^\ell ( \delta / b_i )
}
    \label{eq:prob-feasible} \\
\avg \left[  \prod_{X_{i,j} \in {\cal X}(\sigma)} e^{X_{i,j}} \right]
&\leq (e^0 +e^1\delta/2) \avg \left[  \prod_{X_{i,j} \in {\cal X}(\sigma) \setminus\{X_{1,2}\}} e^{X_{i,j}}\right]
\leq (1+e\delta/2)\ldots(1+e\delta/n) \leq e^{e\delta H_n}
    \label{lem:whp-fixed-sequence}~,
\end{align}
where the last inequality applies the argument from the last proof in the Chernoff technique.
}

\begin{lemma} \label{lem:nof-feasible}
    The expected number of feasible sequences is at most polynomial in $n$.
\end{lemma}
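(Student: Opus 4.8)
The plan is to bound the expected number of feasible sequences, $\avg\bigl[\sum_{\sigma \in I(n)} F_\sigma\bigr] = \sum_{\sigma \in I(n)} \Pr[F_\sigma]$, by expanding $\Pr[F_\sigma]$ via Equation~(\ref{eq:prob-feasible}) and then organizing the sum over $I(n)$ so that the product structure collapses. Recall from Equation~(\ref{eq:harmonic-events}) that $\Pr[X_{b_{i-1},b_i}] \leq \frac{\delta}{b_i}$ (using $\delta \geq 12\gamma \cdot 2$), so for a sequence $\sigma = (b_0, b_1, \ldots, b_\ell)$ we have
\begin{align*}
\Pr[F_\sigma] = \prod_{i=1}^{\ell} \Pr[X_{b_{i-1},b_i}] \leq \prod_{i=1}^{\ell} \frac{\delta}{b_i}~.
\end{align*}
The key observation is that summing this bound over all sequences of a \emph{fixed length} $\ell$ factors: since $b_0 = 1$ is forced and the remaining indices range over increasing tuples $1 < b_1 < \cdots < b_\ell \leq n$, relaxing to \emph{all} tuples (not just increasing ones) only increases the sum, giving
\begin{align*}
\sum_{\substack{\sigma \in I(n)\\ |\sigma| = \ell}} \Pr[F_\sigma]
\leq \frac{1}{\ell!}\Bigl( \sum_{b=2}^{n} \frac{\delta}{b} \Bigr)^{\ell}
\leq \frac{(\delta H_n)^\ell}{\ell!}~,
\end{align*}
where the $\frac{1}{\ell!}$ accounts for the ordering constraint.

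First I would establish the per-length bound above carefully (the only subtlety is justifying the relaxation from ordered tuples to unordered ones and the resulting $\ell!$). Then I would sum over all lengths $\ell \in \{0, 1, \ldots, n-1\}$:
\begin{align*}
\sum_{\sigma \in I(n)} \Pr[F_\sigma]
= \sum_{\ell=0}^{n-1} \sum_{\substack{\sigma \in I(n)\\ |\sigma|=\ell}} \Pr[F_\sigma]
\leq \sum_{\ell=0}^{\infty} \frac{(\delta H_n)^\ell}{\ell!}
= e^{\delta H_n}~.
\end{align*}
Since $H_n = \O(\ln n)$, we get $e^{\delta H_n} = e^{\O(\ln n)} = n^{\O(1)}$, which is polynomial in $n$, as claimed. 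I would state the exponent explicitly, roughly $n^{\delta(1+o(1))}$, so that it can be fed into the later union-bound argument over $I(n)$.

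The main obstacle I anticipate is purely bookkeeping: making the relaxation from increasing tuples to arbitrary tuples rigorous while keeping track of the $b_0 = 1$ constraint, and confirming that the harmonic-sum bound $\sum_{b=2}^{n} \delta/b \leq \delta H_n$ is applied with the correct index range (the events $X_{b_{i-1}, b_i}$ have $b_i \geq 2$, so the sum genuinely starts at $b=2$, but bounding by $\delta H_n$ is safe). No probabilistic difficulty arises here — the independence lemma and Equation~(\ref{eq:prob-feasible}) have already done the real work; this lemma is the clean combinatorial consequence. One could alternatively prove it by a direct induction on $n$, showing $\avg[\#\text{feasible sequences over } \{1,\ldots,n\}] \leq (1 + \delta/n)\cdot(\text{same for } n-1)$, but the generating-function-style collapse above is cleaner and I would present that.
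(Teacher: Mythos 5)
Your proof is correct and gives the same $\O(n^\delta)$ bound, but by a genuinely different organization. You stratify $I(n)$ by sequence length $\ell$, bound each stratum by $(\delta H_n)^\ell/\ell!$ via the over-counting relaxation (which is precisely the content of the paper's Observation~\ref{obs:overcounting}, used there for the long-sequence case of the main theorem), and then sum the exponential series $\sum_{\ell} (\delta H_n)^\ell/\ell! = e^{\delta H_n} = \O(n^\delta)$. The paper instead proves this lemma by a direct recursion on the last index of $\sigma$: splitting $I(n)$ into sequences ending in $n$ and those that don't, peeling off one index at a time, and telescoping to $\prod_{j=2}^n(1+\delta/j) = \frac{(n+1)\cdots(n+\delta)}{(\delta+1)!} = \O(n^\delta)$. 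The two routes are close kin: expanding the paper's product $\prod_{j=2}^n(1+\delta/j)$ gives exactly the sum $\sum_{\sigma\in I(n)} \prod_{b_i \in \sigma,\,i>0} \delta/b_i$ that both of you are bounding, so the paper's recursion evaluates the relaxed sum exactly, whereas your stratified estimate over-counts by a constant factor of roughly $e^{\O(\delta)}$ — harmless here since only the polynomial degree matters. Your approach buys unification (the lemma becomes a near-corollary of Observation~\ref{obs:overcounting}, which is needed anyway); the paper's buys a sharper closed form with no detour through the per-length bound.
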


\begin{proof}
Let $g(b_0,\ldots,b_\ell)=b_\ell$ denote the last element of a sequence {\blue and $f(b_0,\ldots,b_\ell)=\prod_{i=1}^\ell (\delta/b_i)$}.
We group the summation terms into those that have and haven't $g(\sigma)=n$, yielding
\begin{align*}
\sum_{\sigma \in I(n)} \Pr[F_\sigma]
&=
\sum_{\sigma \in I(n)~:~g(\sigma) = n} \Pr[F_\sigma]+
\sum_{\sigma \in I(n)~:~g(\sigma) < n} \Pr[F_\sigma] 
\\
\text{``Equation~(\ref{eq:prob-feasible}):''}\quad
&\leq
\sum_{\sigma \in I(n-1)} {\blue f(\sigma)} \cdot \frac{\delta}{n}+ \sum_{\sigma \in I(n-1)} {\blue f(\sigma)}
\\
&=
\left(\frac{\delta}{n} + 1\right)
\sum_{\sigma \in I(n-1)} {\blue f(\sigma)}
\quad = \quad \ldots \quad = 
\prod_{j=2}^{n}\left(\frac{\delta}{j} + 1 \right)
\\
&=
\prod_{j=2}^{n}\frac{j+\delta}{j} =\frac{ (n+1)(n+2)\ldots(n+\delta)}{(\delta+1)!} = \O(n^\delta)\quad,
\end{align*}
as required.
\end{proof}

\begin{observation} \label{obs:overcounting}
For summation over the integers $\{1,\ldots,n\}$, we have 
$\sum_{i_1 < i_2 < \ldots < i_\ell} \frac{1}{i_1 i_2 \ldots i_\ell} < H_n^\ell / \ell !$.
\end{observation}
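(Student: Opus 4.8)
The plan is to expand the $\ell$-th power of the Harmonic number as a sum over \emph{ordered} $\ell$-tuples and then compare with the sum over strictly increasing sequences. Concretely, multiplying out the product gives
\[
H_n^\ell = \Big( \sum_{i=1}^n \frac{1}{i} \Big)^\ell = \sum_{(i_1,\ldots,i_\ell) \in \{1,\ldots,n\}^\ell} \frac{1}{i_1 i_2 \cdots i_\ell}~.
\]
I would then split the range of summation $\{1,\ldots,n\}^\ell$ into the set $D$ of tuples whose coordinates are pairwise distinct and the complementary set $R$ of tuples in which some value repeats, so that the right-hand side becomes $\sum_{D}\frac{1}{i_1\cdots i_\ell} + \sum_{R}\frac{1}{i_1\cdots i_\ell}$.

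For the first part, observe that the map sending a tuple in $D$ to the (unordered) set of its coordinates is exactly $\ell!$-to-one onto the family of strictly increasing sequences $i_1 < i_2 < \cdots < i_\ell$ in $\{1,\ldots,n\}$, and that the summand $\frac{1}{i_1\cdots i_\ell}$ is invariant under permuting the coordinates. Hence
\[
\sum_{(i_1,\ldots,i_\ell)\in D} \frac{1}{i_1 \cdots i_\ell} = \ell! \sum_{i_1 < i_2 < \cdots < i_\ell} \frac{1}{i_1 i_2 \cdots i_\ell}~.
\]
(If $\ell > n$ both sides are $0$ and the claimed bound is trivial, so one may assume $\ell \le n$.)

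Finally, every summand over $R$ is strictly positive, and $R$ is nonempty because (for $\ell \geq 2$, which is the regime relevant in this section) it contains the constant tuple $(1,1,\ldots,1)$, contributing $1$. Combining the two displays therefore yields $H_n^\ell > \ell! \sum_{i_1 < \cdots < i_\ell} \frac{1}{i_1 \cdots i_\ell}$, which rearranges to the stated inequality. The argument is entirely elementary; the only point that needs a moment's care is the strictness of the inequality, which is precisely why one keeps the nonempty remainder set $R$ in the bookkeeping rather than discarding it outright.
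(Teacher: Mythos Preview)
The proposal is correct and takes essentially the same approach as the paper: both arguments rest on the $\ell!$-to-one correspondence between ordered $\ell$-tuples with pairwise distinct entries and strictly increasing sequences, and both obtain the inequality by comparing the sum over distinct tuples with the full sum over $\{1,\ldots,n\}^\ell$. Your treatment is in fact slightly more careful about edge cases (the trivial case $\ell>n$, and the strictness via the nonempty remainder $R$ for $\ell\geq 2$) than the paper's own proof.
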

\begin{proof}
To see that
$$
    \sum_{i_1 < i_2 < \ldots < i_\ell} \frac{1}{i_1 i_2 \ldots i_\ell} 
    = 
    \sum_{i_1} \sum_{i_2 \neq i_1 }
    \sum_{i_3 \not\in \{i_1,i_2\} } \ldots \sum_{i_l \not\in \{i_1,i_2,\ldots,i_{\ell-1}\} } \frac{1}{i_1 i_2 \ldots i_\ell} /\ell! 
$$
one considers a fixed set $A=\{i_1,\ldots,i_\ell\}$ that appears in the expression on the left side and observes that $A$ appears exactly $\ell!$ times as assignment to the $\ell$ indices on the right side.
The inequality follows, since the right side is less than
$\frac{1}{\ell!}
    \sum_{i_1}\sum_{i_2} \ldots \sum_{i_\ell} \frac{1}{i_1 i_2 \ldots i_\ell}
    =
    H_n^\ell / \ell!$~.
\end{proof}

We are now ready to show our second main result.
\begin{theorem}
There is a constant $\beta > 1$, such that the probability that any feasible column sequence exceeds $\beta \ln n$ is less than $1/n^{\Omega(\beta)}$.
\end{theorem}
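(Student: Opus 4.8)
The plan is to bound the bad event by a union bound over all sequences in $I(n)$, and to beat the $|I(n)|=2^{n-1}$ blow-up by exploiting two facts already set up in the excerpt: for a \emph{feasible} sequence $\sigma$ the events $F_\sigma$ are rare (their probabilities multiply, by Equation~(\ref{eq:prob-feasible})), and on the complementary part $\X(\sigma)\setminus\{X_{b_{i-1},b_i}\}$ the truncated column sum $\sigma'$ is still independent of $F_\sigma$ (Equation~(\ref{eq:prob-splitting})), so a Chernoff bound applies to $\sigma'$ conditioned on feasibility. The key inequality we want is, for each $\sigma$,
\begin{align*}
\Pr[\text{$\sigma$ feasible and } \sigma(\pi) \geq \beta \ln n]
\;=\; \Pr[F_\sigma \wedge B_\sigma]
\;\leq\; \Pr[F_\sigma] \cdot \Pr[B'_{\sigma}]
\;+\; (\text{small correction}),
\end{align*}
since $\sigma(\pi)=\sigma'(\pi)+\sum_{i=1}^\ell X_{b_{i-1},b_i}(\pi)$ and on $F_\sigma$ the second summand equals exactly $\ell=|\sigma|$. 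So on $F_\sigma$ we have $\sigma \geq \beta\ln n$ iff $\sigma' \geq \beta\ln n - \ell$; if $\ell$ is small this is essentially $B'_\sigma$, and if $\ell$ is large then $F_\sigma$ itself already carries a factor $\prod 1/b_i \leq 1/(b_1\cdots b_\ell)$ that is summable to something polynomially small after Observation~\ref{obs:overcounting}.

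Concretely I would split $I(n)$ by the length $\ell=|\sigma|$. For short sequences, say $\ell \leq \tfrac{\beta}{2}\ln n$, I bound $\Pr[F_\sigma \wedge B_\sigma]\leq \Pr[F_\sigma]\cdot\Pr[B'_\sigma]$ using Equation~(\ref{eq:prob-splitting}) together with $B_\sigma \cap F_\sigma \subseteq B'_{\sigma,\text{shifted}}\cap F_\sigma$ where the shifted threshold is $\beta\ln n-\ell\geq \tfrac{\beta}{2}\ln n$; then a Chernoff bound in the spirit of Equation~(\ref{lem:whp-fixed-sequence}) gives $\Pr[B'_\sigma]\leq n^{\delta - (\beta/2)\ln 2}$, which is $n^{-\Omega(\beta)}$ for $\beta$ large. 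Summing over all such $\sigma$ costs a factor $|I(n)|\leq 2^{n-1}$, which is \emph{not} affordable — so for the short-sequence case I must instead also weight by $\Pr[F_\sigma]$: the sum $\sum_{\sigma:g(\sigma)<\text{anything}}\Pr[F_\sigma]\leq \sum_{\sigma\in I(n)}\Pr[F_\sigma]=\O(n^\delta)$ by Lemma~\ref{lem:nof-feasible}, so $\sum_{\text{short }\sigma}\Pr[F_\sigma]\Pr[B'_\sigma] \leq \O(n^\delta)\cdot n^{\delta-(\beta/2)\ln 2} = n^{-\Omega(\beta)}$. For long sequences, $\ell > \tfrac{\beta}{2}\ln n$, I discard $B_\sigma$ entirely and just bound $\Pr[F_\sigma]$: by Equation~(\ref{eq:prob-feasible}) and $\Pr[X_{i,j}]\leq\delta/j$ we get $\Pr[F_\sigma]\leq \delta^\ell/(b_1\cdots b_\ell)$, and summing over all increasing tuples of length $\ell$ gives, by Observation~\ref{obs:overcounting}, at most $\delta^\ell H_n^\ell/\ell!$; summing over $\ell>\tfrac{\beta}{2}\ln n$ this tail of $\sum_\ell (\delta H_n)^\ell/\ell!$ is $n^{-\Omega(\beta)}$ for $\beta$ large enough relative to $\delta$ (standard Poisson tail estimate, since $\delta H_n = \Theta(\ln n)$ and the sum runs past $e\cdot\delta H_n$ once $\beta/2 > e\delta$).

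Putting the two pieces together, $\Pr[\exists \sigma \in I(n): \sigma \text{ feasible}, \sigma(\pi)\geq\beta\ln n] \leq \sum_{\text{short}} \Pr[F_\sigma]\Pr[B'_\sigma] + \sum_{\text{long}} \Pr[F_\sigma] = n^{-\Omega(\beta)}$. Since (as recalled after Equation~(\ref{eq:bound-faces-of-sequence})) the number of face transitions along \emph{any} root-to-leaf TSD path is dominated by the truncated column sums of a feasible sequence, i.e.\ $m \leq \sigma(\pi)$ for the feasible sequence $\sigma$ of its boundary-priority values, the event that some combinatorial path has more than $\beta\ln n$ face transitions is contained in the event just bounded, giving the theorem. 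The main obstacle is the short-sequence case: a naive union bound over $2^{n-1}$ sequences is fatal, so the argument genuinely needs the $\Pr[F_\sigma]$-weighting (hence Lemma~\ref{lem:nof-feasible}) combined with the independence splitting of Equation~(\ref{eq:prob-splitting}) — one must be careful that the threshold shift from $\beta\ln n$ to $\beta\ln n-\ell$ stays $\Omega(\beta\ln n)$ exactly on the range of $\ell$ where the $\Pr[F_\sigma]$ mass is not yet small, so that the Chernoff factor and the $\O(n^\delta)$ feasibility mass multiply to something still $n^{-\Omega(\beta)}$. Choosing the length cutoff (here $\tfrac{\beta}{2}\ln n$) to make both regimes work simultaneously, and verifying the constants line up, is the delicate bookkeeping step.
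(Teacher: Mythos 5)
Your proposal is correct and follows essentially the same route as the paper's proof: split by sequence length, kill long sequences by bounding $\sum\Pr[F_\sigma]$ via Observation~\ref{obs:overcounting} and a Stirling/Poisson tail, and kill short sequences by combining the independence splitting of Equation~(\ref{eq:prob-splitting}) with the threshold shift $\beta\ln n \to \beta\ln n - \ell$ and the total feasibility mass $\O(n^\delta)$ from Lemma~\ref{lem:nof-feasible}. The only cosmetic difference is the choice of cutoff ($\tfrac{\beta}{2}\ln n$ versus the paper's $\beta' H_n$ with $\beta > \beta' > \delta e^2$), which does not change the argument.
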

\begin{proof}
Using the union bound, we have
$
\Pr\left[ \bigcup_{\sigma \in I(n)} B_\sigma \wedge F_\sigma \right] 
\leq
 \sum_{\ell=1}^{n-1} \sum_{\sigma:|\sigma|=\ell} \Pr[B_\sigma \wedge F_\sigma]$.

For long sequences, i.e. $\ell > \beta' H_n$, we use the following bound
\begin{align*}
\sum_{\sigma~:~|\sigma|=\ell} \Pr[B_\sigma \wedge F_\sigma]
\leq
&\sum_{\sigma~:~|\sigma|=\ell} \Pr[F_\sigma]
\\
\text{``Equation~(\ref{eq:prob-feasible}):''}\quad\quad
\leq
& \sum_{\sigma~:~|\sigma|=\ell} ~~\prod_{b_i \in \sigma~:~i>0 } \frac{\delta}{b_i}
\leq
\sum_{ i_1 < i_2 < \ldots < i_\ell  } \frac{\delta^\ell}{i_1 i_2 \ldots i_\ell} 
\\
\text{``Observation~\ref{obs:overcounting}:''}\quad \quad
<
&~\delta^\ell H_n^\ell /\ell!  
\\
\text{``Stirling:''}\quad \quad
=&~\O \left( \left(\frac{\delta H_n}{\ell/e}\right)^\ell \right)
 = \O \left( \left(\frac{\delta e}{\beta' } \right)^\ell \right) 
 \overset{\beta' > \delta e^2}{=}
 \O \left( 1/ n^{\beta'} \right)  ,
\end{align*}
which suffices for a union bound over the (less than $n$) possible values of $\ell$.

For short sequences, i.e. $\ell \leq \beta' H_n$, we show that a $\beta >\beta'$ is sufficient in the definition of the events $B_\sigma$.
{\blue
For any sequence $\sigma$ of length $\ell$, we have that $F_\sigma$ occurs if and only if $\sigma-\sigma'=\ell$.
Thus, for any $m \geq \ell$, we have $\Pr[\sigma \geq m ~|~ F_\sigma]=\Pr[\sigma \geq m ~|~ \sigma-\sigma'=\ell]=\Pr[\sigma' \geq m-\ell]\leq \Pr[\sigma' \geq m-\beta'\ln n]$.
Since $\Pr[B'_\sigma] \leq \Pr[B_\sigma]$ for any threshold $\beta$, we have from Eq.~(\ref{lem:whp-fixed-sequence}) the high probability bound $\Pr[\sigma \geq (\beta-\beta')\ln n] \leq e^{e\delta \ln n }/e^{(\beta-\beta')\ln n}=\frac{n^{e\delta}}{n^{\beta-\beta'}}$.
}
Hence the probability bound over all short sequences is at most
$$
\sum_{\sigma :|\sigma|\leq \beta' H_n}
\Pr[B_\sigma|F_\sigma] \Pr[F_\sigma]
\leq
\frac{n^{e\delta}}{n^{\beta-\beta' }}     \sum_{\sigma :|\sigma|\leq \beta' H_n} \Pr[F_{\sigma}] \leq 
\frac{n^{e\delta}}{n^{\beta-\beta' }}     \sum_{\sigma } \Pr[F_{\sigma}]
\leq
\O(n^{4\delta})/n^{\beta-\beta' } \quad,
$$
where the last inequality is due to Lemma~\ref{lem:nof-feasible}.
This is $1/n^{\Omega(\beta)}$ for a sufficiently large constant $\beta$.
\end{proof}

\section{Further Applications and Improved Results} \label{sec:conf-spaces}
We now discuss various extensions of the technique that yield improved bounds.

\subsection{TSD size for crossing segments} \label{sec:tsd-size-crossing}

From the results on Davenport-Schinzel sequences, we have that the number of $x$-intervals of the lower envelope of $m$ (potentially crossing) segments is $\O(m\alpha(m))$, where $\alpha$ is the inverse of Ackermann's function (see~\cite{AgerwalSharirDavSchinzelCG}).
Using $|P^+|\leq \O(\alpha(j))\sum_{i<j} X_{i,j}$ in the proof of Lemma~\ref{lem:relation-events-destroyed} yields that the number of occuring row events is in $[D_j/\O(\alpha(j)) , 3D_j]$.
Applying this lower bound however can only give estimates for the TSD size of the form $\O(\alpha(n)(n+k))$ with probability at least $1-1/e^{n+k}$, since $\alpha(n) \neq \O(1)$.

This shortcoming points out an interesting property of our Pairwise Event technique that we will strengthen in this section.
The underlying problem in obtaining a stronger bound is that the worst-case size of the TSD on $n$ segments is $\Omega\left(\alpha(\frac{n}{2})n^2\right)$, which exceeds the event count $\binom{n}{2}$ by more than a constant factor.
We overcome this obstacle by instead using $\binom{n}{3}$ events to analyze the total number of structural changes of the RIC on crossing segments.

Recall that $|\A(S)|=1+3n+3k$ (see Section~\ref{sec:recap-tsd}).
We partition set of trapezoidal faces ${\A(S)=: \A_\otimes(S) \cup \A_\odot(S)}$ into two classes,
those where $\textit{left}(\Delta)$ is a crossing and those where $\textit{left}(\Delta)$ is a segment endpoint (or the left point of the domain boundary).
This partition extends to the set of incident faces 
$f(s_j, S_{\leq j}) =: f_\otimes (s_j, S_{\leq j}) \cup f_\odot (s_j, S_{\leq j}) $ 
and we define the random variables $E_j, O_j, C_j : \Perm(S) \to \N$ such that
$$
D_j = 
\underbrace{|f_\otimes (s_j, S_{\leq j})|}_{=:O_j+C_j}
+
\underbrace{|f_\odot(s_j, S_{\leq j})|}_{=:E_j} 
~,
$$
where $C_j$ is the number of trapezoids whose $\textit{left}$ is due to a crossing of $s_j$ and $O_j$ the remainder of this class.
Note that $3k = \sum_j C_i(\pi)$ regardless of the permutation $\pi \in \Perm(S)$, and we have
$3k + {\normalfont\avg}[\sum_j O_j] =\O(n+k)$.
We define the events
\begin{align*}
X_{i,j} &\Longleftrightarrow f_\odot(s_j, S_{\leq j}) \text{ contains a trapezoid bounded by } s_i
&& 1\leq i<j\leq n
\\
Y_{h,i,j} &\Longleftrightarrow  f_\otimes(s_j, S_{\leq j}) \text{ contains a trapezoid with crossing }(s_h,s_i)
&& 1 \leq h < i < j \leq n
~.
\end{align*}

The proof of Lemma~\ref{lem:relation-events-destroyed} gives that 
$E_j / 6 \leq \sum_{i<j}X_{i,j} \leq 3E_j$ for every $\pi \in \Perm(S)$ and $j>1$.
Since $\avg[E_j] \leq \avg[D_j] \leq 4\gamma$, we have $\sum_{i,j} \avg[X_{i,j}] \leq 12\gamma n$.
Thus Theorem~\ref{thm:inductive-chernoff} gives for any $T>0$ the bound $\Pr[\sum_j E_j \geq T/6] \leq \exp(12\gamma n - T \ln2 )$.

It remains to bound $\Pr[\sum_j O_j \geq T]$.
Let the set $r(j):=\{ Y_{h,i,j}: h<i<j\}$ be the events of row $j$ and let the set $c(h):=\{Y_{h,i,j}:h<i<j\}$ be the events of column $h$.
We now show that $O_j$ coincides with the number of crossings visible from $s_j$ in $\A(S_{\leq j})$.
\begin{lemma} For each $\pi \in \Perm(S)$ and $j>1$, we have $O_j (\pi) = \sum_{Y \in r(j)} Y(\pi)$.
\end{lemma}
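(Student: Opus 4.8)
The plan is to exhibit a bijection between the trapezoids counted by $O_j$ and the events of $r(j)$ that occur on $\pi$. Write $S' := S_{\leq j}(\pi)$. Every trapezoid $\Delta$ counted by $O_j$ lies in $\A_\otimes(S')$, so $leftp(\Delta)$ is the crossing point of two segments of $S'$; as $\Delta$ is not one of the $C_j$ trapezoids, this crossing does not involve $s_j$, hence it equals $s_h\cap s_i$ for a unique pair with $1\leq h<i<j$. Sending $\Delta$ to this pair is well defined, and since $\Delta\in f_\otimes(s_j,S')$ it forces $Y_{h,i,j}(\pi)=1$. Conversely, whenever $Y_{h,i,j}(\pi)=1$ some trapezoid of $f_\otimes(s_j,S')$ has $leftp=s_h\cap s_i$, and because $h<i<j$ this trapezoid is among those counted by $O_j$; so the map surjects onto the occurring events of $r(j)$. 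This already yields $O_j(\pi)\geq\sum_{Y\in r(j)}Y(\pi)$.

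It remains to prove injectivity: for a fixed crossing point $p=s_h\cap s_i$ with $h<i<j$, at most one trapezoid of $f_\otimes(s_j,S')$ has $leftp=p$. The geometric input I would use is the local picture at $p$ in $\A(S')$: the upward and downward vertical rays emitted from $p$, together with the portions of $s_h$ and $s_i$ that leave $p$ towards increasing $x$, partition a small right-neighborhood of $p$ into exactly three wedges --- above both of $s_h,s_i$, between them, and below both. Hence at most three trapezoids of $\A(S')$ have $leftp=p$, one per wedge. The middle trapezoid has $top$ and $bot$ in $\{s_h,s_i\}$, both of priority below $j$, so $s_j$ cannot bound it. For the top wedge, $s_j$ can bound the trapezoid only as its $top$ (its $bot$ being whichever of $s_h,s_i$ runs higher just to the right of $p$), and symmetrically $s_j$ can bound the bottom-wedge trapezoid only as its $bot$. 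These two options are mutually exclusive: immediately to the right of $p$ one of $s_h,s_i$ lies above the other, so a single segment $s_j$ cannot be simultaneously above the higher one and below the lower one. Therefore at most one of the (at most three) trapezoids with $leftp=p$ is bounded by $s_j$, which is the required injectivity, and combined with surjectivity we conclude $O_j(\pi)=\sum_{Y\in r(j)}Y(\pi)$.

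I expect the main obstacle to be exactly this small case analysis --- deciding which of the trapezoids having $p$ as their left vertex can carry $s_j$ on their boundary; the rest is bookkeeping about the partition of $f_\otimes(s_j,S_{\leq j})$. The only other point needing care is degeneracy: that no third segment passes through $p$ and that $p$ is not also an endpoint, both guaranteed by the general-position conventions of Section~\ref{sec:recap-tsd}, which makes ``the crossing $s_h\cap s_i$'' and the three-wedge picture unambiguous.
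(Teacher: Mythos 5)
Your proof takes essentially the same route as the paper: reduce the identity to the claim that, for each crossing $p = s_h\cap s_i$ with $h<i<j$, at most one of the three trapezoids having $leftp=p$ lies in $f_\otimes(s_j,S_{\leq j})$, then package this as a bijection. The paper asserts the ``one out of three'' claim in a single sentence (``only one of the three trapezoids due to a crossing $\{s,s'\}$ can be incident to $s_j$''); you supply the missing case analysis via the three-wedge picture at $p$, which is exactly the intended argument, and your surjectivity/``$\geq$'' direction matches the paper's.

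There is, however, a gap in the injectivity step, and it is worth being explicit because it is also present (silently) in the paper. You argue that the middle trapezoid ``cannot be bounded by $s_j$'' because its $top$ and $bot$ are in $\{s_h,s_i\}$, and that the top/bottom wedges can be bounded by $s_j$ ``only as'' their $top$/$bot$. This is only correct if ``bounded by $s_j$'' were restricted to $top$/$bot$ incidence. But $\deg(s,S)$ in Section~2 is defined to include boundedness via $left$ or $right$, and indeed the identity $\sum_j C_j = 3k$ used right before this lemma relies on counting trapezoids whose $leftp$ is a \emph{crossing} involving $s_j$, so crossings on vertical walls do count as incidences. Symmetrically, $s_j$ can bound a wedge trapezoid via its right vertical wall, namely when $rightp$ is an endpoint of $s_j$ or a crossing $s_j\cap s_h$ (or $s_j\cap s_i$). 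Concretely: if the left endpoint of $s_j$ lies in the middle wedge just to the right of $p$ and $s_j$ then crosses $s_h$ further to the right, the middle trapezoid (right wall at $s_j$'s endpoint) and the upper wedge trapezoid (right wall at $s_j\cap s_h$) both lie in $f_\otimes(s_j,S_{\leq j})$ with $leftp=p$, so both are counted by $O_j$ while only a single event $Y_{h,i,j}$ occurs. Your case analysis does not address this $rightp$ case, so the injectivity is not established under the stated definitions. To close it you would need either to argue these ``right-wall-only'' incidences cannot occur for trapezoids with $leftp$ a crossing not involving $s_j$ (which, as the example shows, they can), or to redefine $O_j$ and $Y_{h,i,j}$ in terms of $top$/$bot$ incidence and account for the right-wall contributions separately. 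The paper's own one-sentence proof does not resolve this either, so your attempt faithfully reproduces the paper's argument, but strictly speaking neither version closes this corner case.
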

\begin{proof}
Let $S'= S_{\leq j}$. 
From the definition, $O_j = |f_\otimes(s_j,S')|-C_j$ counts only those trapezoids of $\A_\otimes(S')$ that are incident to $s_j$ but due to a crossing of some $\{s,s'\} \subseteq S'\setminus\{s_j\}$.
For ``$\leq$'', observe that only one of the three trapezoids due to a crossing $\{s,s'\}$ can be incident to $s_j$ and each such trapezoid has one occurring row event. 
For ``$\geq $'', we use that crossing line segments $\{s,s'\}$ that are adjacent to $s_j$  have at most one intersection point that is either above or below $s_j$ in $\A(S')$, each of these trapezoids is counted exactly once in $O_j$.
\end{proof}

\begin{theorem}
The upper-tail $\Pr[\sum_{h<i<j} Y_{h,i,j} \geq T ] \leq \exp( {\normalfont\avg}[\sum_{h<i<j} Y_{h,i,j}] - T \ln 2)$ for all $T>0$.
\end{theorem}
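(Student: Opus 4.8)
The plan is to mimic the proof of Theorem~\ref{thm:inductive-chernoff} almost verbatim, replacing the two-index family $\{X_{i,j}\}$ by the three-index family $\{Y_{h,i,j}\}$, grouping the summation by the \emph{first} index $h$ (the ``column'' index), and running the same inductive Chernoff/Markov argument. Concretely, set $t:=\ln 2$ and, for each $1\le h<n$, let $C_h:=\sum_{Y\in c(h)}Y=\sum_{h<i<j\le n}Y_{h,i,j}$ be the column sum. Markov's inequality applied to $\exp(t\sum_h C_h)$ reduces the claim, exactly as in Equation~(\ref{eq:TSD-size-markov}), to showing $\avg[\exp(t\sum_h C_h)]\le\exp(\avg[\sum_h C_h])$.

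The crux is to re-establish the three ingredients that made the original induction go through, now for the $Y$-events. First, I need the analogue of Equation~(\ref{eq:conditioning-uninformative}): conditioned on a permutation suffix $(s_h,\ldots,s_n)$, the outcome of $Y_{h,i,j}$ is determined once the pick $s_h$ is revealed, because $Y_{h,i,j}$ asks only whether the particular crossing $(s_h,s_i)$ bounds a trapezoid incident to $s_j$ in $\A(S_{\le j})$ — and given the suffix, $s_i$ and $s_j$ are already fixed, so the event depends on $S_{\le j}$, $s_j$, $s_i$, and $s_h$ only. In particular, revealing any later-column event $Y\in c(h')$ with $h'>h$ gives no extra information about $Y_{h,i,j}$, which is the needed conditional-independence statement. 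Second, I need the product-factorization analogue of Equations~(\ref{eq:deter-events-c1})--(\ref{eq:deter-events-ci}): once the suffix $(s_h,\ldots,s_n)$ is fixed, \emph{every} event in column $c(h)$ is either certain or impossible (its value is pinned down by the suffix together with the identity $s_h$ which is the first element of the suffix), so $\avg[\prod_{Y\in c(h)}e^{tY}\mid s_h,\ldots,s_n]=\prod_{Y\in c(h)}\avg[e^{tY}\mid s_h,\ldots,s_n]$ by the same ``certain-or-impossible events multiply'' remark used in the original proof. Third, with $t=\ln 2$ we have $e^t-1=1$, so each single-event factor satisfies $\avg[e^{tY}\mid\cdot]=1+\avg[Y\mid\cdot]\le\exp(\avg[Y\mid\cdot])$ via $1+x\le e^x$, giving the base case $\avg[e^{tC_1}\mid s_1,\ldots,s_n]\le\exp(\avg[C_1\mid s_1,\ldots,s_n])$ verbatim.

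With these three facts in hand, the induction is identical to the one in Theorem~\ref{thm:inductive-chernoff}: defining $Q_h:=\exp(tC_1+\cdots+tC_h)$, one conditions on $(s_{h+1},\ldots,s_n)$, averages over the uniform choice of $s_h\in S'=S\setminus\{s_{h+1},\ldots,s_n\}$, applies the law of total expectation, uses the ``conditioning on $c(h)$ is uninformative for $Q_{h-1}$'' property to detach $\avg[Q_{h-1}\mid s_h,\ldots,s_n]$ from the factor $e^{tC_h}$, bounds the former by the induction hypothesis and the latter by $\exp(\avg[C_h\mid s_h,\ldots,s_n])$, and recombines to get $\avg[Q_h\mid s_{h+1},\ldots,s_n]\le\exp(\avg[C_1+\cdots+C_h\mid s_{h+1},\ldots,s_n])$. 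Taking $h=n-1$, removing the conditioning on $s_n$ by averaging over $s_n\in S$, and plugging back into Markov's inequality yields $\Pr[\sum_{h<i<j}Y_{h,i,j}\ge T]\le\exp(\avg[\sum_{h<i<j}Y_{h,i,j}]-T\ln 2)$.

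The main obstacle I anticipate is verifying the ``certain-or-impossible'' claim for whole columns: one must check carefully that, given the suffix $(s_h,\ldots,s_n)$, nothing further is random about $Y_{h,i,j}$ — i.e. that the two remaining indices $i,j$ with $h<i<j\le n$ are fully determined by the suffix (they are, since the suffix names the elements at positions $h,h+1,\ldots,n$, hence names $s_i$ and $s_j$), so that the only ``active'' randomness in column $h$ is the identity of $s_h$, which the suffix also reveals. Once this bookkeeping is pinned down, the factorization Equations~(\ref{eq:deter-events-c1})--(\ref{eq:deter-events-ci}) and the uninformative-conditioning Equation~(\ref{eq:conditioning-uninformative}) transfer mutatis mutandis, and the rest is a line-by-line reproduction of the earlier argument; no new inequality is needed.
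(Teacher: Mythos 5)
Your proposal is correct and follows essentially the same route as the paper's own proof: group the three-index events by the first (column) index $h$, verify the analogues of Equations~(\ref{eq:conditioning-uninformative}), (\ref{eq:deter-events-c1}), and (\ref{eq:deter-events-ci}) by observing that $Y_{h,i,j}$ is fully determined by the suffix $(s_h,\ldots,s_n)$ (so column events are certain-or-impossible and conditioning on later columns is uninformative), and then rerun the inductive Chernoff/Markov argument of Theorem~\ref{thm:inductive-chernoff} verbatim. The only cosmetic discrepancy is that in the base case you write the conditioning as $(s_1,\ldots,s_n)$ rather than $(s_2,\ldots,s_n)$, but since fixing the suffix of length $n-1$ already determines the entire permutation, these are the same event and the statement is unaffected.
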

\begin{proof}
To use the proof technique of Theorem~\ref{thm:inductive-chernoff}, it is sufficient to have the analogues of Eq.~(\ref{eq:conditioning-uninformative}), (\ref{eq:deter-events-c1}), and (\ref{eq:deter-events-ci}).
We have, for each event $Y \in c(h)$ and event $Y' \in c(h')$ with $h'>h$, that
\begin{align*}
    \avg[Y|s_{h+1},\ldots,s_n] &= \avg[Y|Y',s_{h+1},\ldots,s_n]~.
\end{align*}
To see this, we consider the crossings of $s_j$ in $S_{\leq j}$ and (again) think of the backward process that builds the random permutation by successively choosing one of the remaining elements.
Picking $s_j$ fixes a partition of $S_{\leq j}=\{s_j\} \cup A_j \cup N_j$ into those segments that are adjacent to a crossing of $s_j$ and those segments that are not adjacent.
Hence picking $s_h$ from given $S \setminus\{s_{h+1},\ldots,s_n\}$ determines the outcome of \emph{all events} in $c(h)$, i.e. $Y_{h,i,j}$ occurs if and only if $s_h$ has color $A_i$ and $A_j$.

Hence for given suffix condition, the events in $c(h)$ are either certain or impossible and thus
\begin{align*}
\avg[ \prod_{Y \in c(1)}  \exp(tY)|s_2,\ldots,s_n] &= \prod_{Y \in c(1)} \avg[ \exp(tY)|s_2,\ldots,s_n] 
\\
\avg[ \prod_{Y \in c(h)}  \exp(tY)|s_h,\ldots,s_n] &= \prod_{Y \in c(h)} \avg[ \exp(tY)|s_h,\ldots,s_n] 
&&\forall~ 2 \leq h \leq n-2 ~.
\end{align*}
The remaining arguments of the proof of Theorem~\ref{thm:inductive-chernoff} require no modification, and consequently this proves the theorem.
\end{proof}

We conclude
\begin{align*} 
\Pr\left[\textstyle\sum_j E_j + O_j + C_j \geq 2T\right] &\leq
\Pr\left[\textstyle\sum_j E_j \geq T\right] + \Pr\left[\textstyle\sum_j B_j+C_j \geq T\right] \\
&\leq \exp\left(12\gamma n - T\tfrac{\ln 2}{6}\right) + 
      \Pr\left[\textstyle\sum_j B_j \geq T-3k \right] \\
&\leq \exp\left(12\gamma n - T\tfrac{\ln 2}{6}\right) + 
      \exp\left( \avg[ {\textstyle \sum} Y_{h,i,j}] - (T-3k)\ln 2 \right) \\
      &\leq 1/e^{n+k}~,
\end{align*}
where the last inequality follows from ${\normalfont\avg}[\sum Y_{h,i,j}]={\normalfont\avg}[\sum_j O_j]=\O(n+k)$ and choosing $T=\beta(n+k)$, with a sufficiently large constant $\beta>1$.
We have shown the following:

\begin{corollary}
    The TSD size is $\O(n+k)$ with probability at least $1-1/e^{n+k}$.
\end{corollary}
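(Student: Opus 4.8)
The plan is to bound the random variable $\sum_j D_j$, which controls the TSD size up to the constant factor $3$ coming from the fact that destroying one face inserts at most three new DAG nodes. Following the decomposition in the excerpt, I would write $D_j = E_j + O_j + C_j$, isolate the purely deterministic term $\sum_j C_j = 3k$ (true for every $\pi \in \Perm(S)$), and then establish an exponential upper tail for $\sum_j E_j$ and for $\sum_j O_j$ separately, so that a union bound finishes the job.

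For $\sum_j E_j$ the argument of Section~\ref{sec:tailbound} transfers verbatim: the proof of Lemma~\ref{lem:relation-events-destroyed} restricted to endpoint-anchored faces gives $E_j/6 \le \sum_{i<j} X_{i,j} \le 3E_j$ for \emph{every} permutation, where $X_{i,j}$ is the event that $f_\odot(s_j, S_{\leq j})$ contains a face bounded by $s_i$. Since $\avg[E_j] \le 4\gamma$ (the endpoint-anchored faces of $\A(S_{\leq j})$ number only $\O(j)$), the mean $\avg[\sum_{i,j} X_{i,j}] \le 12\gamma n$, and Theorem~\ref{thm:inductive-chernoff} yields $\Pr[\sum_j E_j \ge T/6] \le \exp(12\gamma n - T \ln 2)$. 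For $\sum_j O_j$ I would use the identity $O_j = \sum_{Y \in r(j)} Y$ (valid pointwise, as established above) together with the $Y$-event tail theorem, which transplants the matrix property driving Theorem~\ref{thm:inductive-chernoff} to the triple-indexed events and gives $\Pr[\sum_{h<i<j} Y_{h,i,j} \ge T] \le \exp(\avg[\sum Y_{h,i,j}] - T \ln 2)$. The one quantity still to pin down is the mean: backward analysis over the $k$ crossings (each present in $S_{\leq j}$ with probability $\tfrac{j(j-1)}{n(n-1)}$) bounds $\avg|\A_\otimes(S_{\leq j})|$ and hence shows $3k + \avg[\sum_j O_j] = \O(n+k)$, so $\avg[\sum Y_{h,i,j}] = \avg[\sum_j O_j] = \O(n+k)$.

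To assemble, I would use $\sum_j D_j = \sum_j E_j + \sum_j O_j + 3k$ and a union bound, $\Pr[\sum_j D_j \ge 2T] \le \Pr[\sum_j E_j \ge T] + \Pr[\sum_j O_j \ge T - 3k]$, substitute the two exponential tails, and choose $T = \beta(n+k)$ for a sufficiently large constant $\beta > 1$; this drives the first summand below $\tfrac12 e^{-(n+k)}$ (the additive $12\gamma n$ and the $\ln 2$ are absorbed) and, since $T - 3k = \Omega(n+k) \gg \avg[\sum Y_{h,i,j}]$, the second below $\tfrac12 e^{-(n+k)}$ as well. Multiplying by the node-creation factor $3$ then gives TSD size $\O(n+k)$ with probability at least $1 - 1/e^{n+k}$.

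I expect the main obstacle to be not the assembly but securing the linear-in-$(n+k)$ mean bounds while routing $O_j$ through the right events. The worst-case TSD size is $\Omega(\alpha(n/2)\, n^2)$, which is super-quadratic, so the standard monotone-size ($M(j)/j$) machinery is unavailable and one cannot stretch the $\binom{n}{2}$ adjacency events $X_{i,j}$ alone: doing so would only give the weaker $\O(\alpha(n)(n+k))$ bound, since the lower-envelope lemma costs a factor $\alpha(j)$. Using the $\binom{n}{3}$ crossing-anchored events $Y_{h,i,j}$ instead keeps both the event count and its expectation linear, and the remaining care is the threshold arithmetic ($T$ versus $T-3k$) and verifying that the $Y$-events genuinely inherit the conditional-determinism property under a fixed permutation suffix.
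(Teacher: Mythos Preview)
Your proposal is correct and matches the paper's own proof essentially line for line: the same decomposition $D_j = E_j + O_j + C_j$ with $\sum_j C_j = 3k$ deterministic, the same pairwise events $X_{i,j}$ for the endpoint-anchored part (with $E_j/6 \le \sum_i X_{i,j} \le 3E_j$ and mean $\O(n)$), the same triple-indexed events $Y_{h,i,j}$ for the crossing-anchored part (with the pointwise identity $O_j = \sum_{Y\in r(j)} Y$ and mean $\O(n+k)$), the same transplanted inductive Chernoff bound for the $Y$-events, and the same union-bound assembly $\Pr[\sum_j D_j \ge 2T] \le \Pr[\sum_j E_j \ge T] + \Pr[\sum_j O_j \ge T-3k]$ with $T=\beta(n+k)$. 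Even your closing remarks about why the $\binom{n}{2}$ events alone only yield $\O(\alpha(n)(n+k))$ and why one must pass to $\binom{n}{3}$ events mirror the paper's motivating discussion.
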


\subsection{History DAGs for 2D Delaunay Triangulation and 3D Convex Hulls} \label{ssec:dag-triangulations}
We briefly outline the RIC to define the necessary terminology.
For the input set $S$ of $n$ sites, we compute the Delaunay triangulation $\A(S)$ by inserting the sites, one at a time, to derive $\A(S_{j})$ from $\A(S_{j-1})$.
The initially empty triangulation $\A(S_0)$ only contains the bounding triangle $(-\infty,-\infty)(+\infty,-\infty)(0,+\infty)$.
Given the triangle $pqr$ in $\A(S_{j-1})$ that contains the next site $s_j$, we split the face into three triangles, i.e. $pqs_j, qrs_j$ and $rps_j$, and scan the list of incident triangles of $s_j$, in CCW order.
If one such triangle is not (local) Delaunay, we flip the edge and replace its former entry with both, now incident, triangles in the CCW list until all incident triangles are Delaunay.
The work is proportional to the degree of $s_j$ in the triangulation of $\A(S_{j})$, which is expected $\O(1)$ due to standard Backward Analysis.

For locating the face that contains $s_j$, there are \emph{two} well known search DAG variants that use the triangulation history $\A(S_0),\A(S_1),\ldots,\A(S_{j-1})$.
The first method~\cite[Section~$3$]{GuibasKS92} keeps record of all intermediary (non-Delaunay) triangles to search for next site $s_j$ (e.g. \cite[Chapter~$9.3$]{BergCompGeo}).
Instead of keeping intermediary triangles, the second method~\cite[Section~$5$]{GuibasKS92} simply keeps the final CCW lists, referenced by the deleted triangles, which allows point location searches to descend in the history using (radial) binary searches (e.g. \cite[Chapter~3.3]{MulmuleyBook}).
To simplify the presentation, we assume that the final CCW lists are stored as array for the binary search. (Skip lists can be used for the pointer machine model.)

It is well known that the radial-search method can also be used for the RIC of $3$D~Convex Hulls, with minor changes of the algorithm and analysis (see Figure~\ref{fig:dt-ch-dag}; cf. \cite{MulmuleyBook}).
The history size of either method is expected $\O(n)$ and the first method has expected $\O(n \log n)$ runtime (see e.g. \cite[Chapter~$9.5$]{BergCompGeo}).
However, high probability bounds for the history size are unknown, the best bound is Corollary~$26$ in~\cite{ClarksonMS93}, as discussed in Section~\ref{sec:rel-work}.
Moreover, the first method does not yield any bound for general point location queries (only for the input sites).
This is overcome by the second method, that has w.h.p.~$\O(\log^2 n)$ search cost for \emph{all} query points (see \cite[Theorem~$3.3.2$]{MulmuleyBook}). 

We now improve the tail bounds for the DAG size and show that point location cost is w.h.p. $\O(\log n)$, using the proposed technique of Pairwise Events.

\begin{figure}[t] \centering
    \includegraphics[]{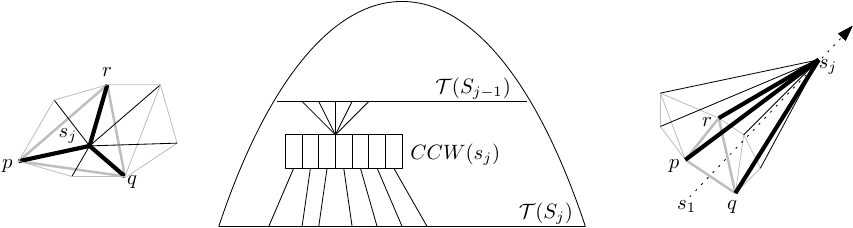}
    \caption{Schema of a search DAG using the radial-search for 2D Delaunay triangulations (left) and 3D convex hulls (right).}
    \label{fig:dt-ch-dag} 
\end{figure}

\subsubsection{Improved Space Bounds}
To simplify presentation, we use the ordinary non-degeneracy assumptions that no four points are co-circular (Delaunay) and no four points are co-planar (i.e. the faces of the $3$D convex hull are triangles).
Let the random variable $D_j$ be the number of triangles in $\A(S_j)$ that are incident to $s_j$.
Recall that $\avg[D_j]=\O(1)$ and $\sum_j \avg[D_j]=\O(n)$.
We define event 
$X_{i,j}$ to occur if and only if $s_i$ and $s_j$ are adjacent in $\A(S_j)$, i.e. they form an edge of the triangulation.
The analogue of Lemma~\ref{lem:relation-events-destroyed} is even simpler, since $D_j = \sum_{i<j} X_{i,j}$ holds with equality (i.e. $D_j$ is both upper and lower bound for the number of occurring row events).
Using Theorem~\ref{thm:inductive-chernoff} with these bounds gives the following result.

\begin{corollary}
    In the Randomized Incremental Construction of $2$D Delaunay Triangulations and $3$D Convex Hulls, the history size is $\O(n)$ with probability at least $1-1/e^n$.
\end{corollary}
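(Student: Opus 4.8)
The plan is to mirror the proof of Corollary~\ref{cor:tsd-size-non-crossing} verbatim, using the newly defined events $X_{i,j}$ for the triangulation setting. First I would recall the structural facts already established in this subsection: the random variable $D_j$ counts the triangles of $\A(S_j)$ incident to $s_j$, standard Backward Analysis gives $\avg[D_j]=\O(1)$ and hence $\sum_{j}\avg[D_j]=\O(n)$, and — crucially — the relation $D_j = \sum_{i<j} X_{i,j}$ holds \emph{with equality} for every permutation $\pi \in \Perm(S)$. This equality is the analogue of Lemma~\ref{lem:relation-events-destroyed}, only cleaner: there are no distortion constants because every triangle incident to $s_j$ contributes exactly the event of the unique third vertex, and every edge $s_is_j$ of $\A(S_j)$ bounds exactly one such triangle on the side it was just created.

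Next I would verify that the three combinatorial ingredients needed to run Theorem~\ref{thm:inductive-chernoff} carry over. Define the rows $r(j):=\{X_{i,j}:i<j\}$ and columns $c(i):=\{X_{i,j}:i<j\le n\}$ as before. Thinking of the permutation built backwards, picking $s_j$ from $S_{\le j}$ partitions the remaining elements into those adjacent to $s_j$ in $\A(S_j)$ and those not, which means that once $s_i$ is chosen from $S_{\le i}$ the outcomes of \emph{all} events in $c(i)$ are simultaneously determined — exactly Equation~(\ref{eq:conditioning-uninformative}). Consequently, conditioned on any suffix $(s_i,\ldots,s_n)$, the events of a column are each certain or impossible, so the product-of-exponentials factorization in Equations~(\ref{eq:deter-events-c1}) and~(\ref{eq:deter-events-ci}) holds. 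With these in hand, Theorem~\ref{thm:inductive-chernoff} applies unchanged and yields, for $X:=\sum_{i,j}X_{i,j}=\sum_j D_j$,
\[
\Pr[X \ge T] \le \exp(\avg[X] - T\ln 2) \qquad \text{for all } T>0.
\]
Since $\avg[X]=\sum_j\avg[D_j]\le cn$ for some constant $c$, choosing $T=\beta n$ with $\beta$ a sufficiently large constant (say $\beta > (c+1)/\ln 2$) makes the right-hand side at most $e^{-n}$. Finally, because the history DAG in either variant has size linear in $\sum_j D_j$ (each insertion creates $\O(D_j)$ new DAG nodes, whether one keeps intermediary triangles or only the final CCW lists), a history size of $\O(n)$ fails with probability at most $1/e^n$, as claimed.

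The main obstacle — which turns out to be mild here — is confirming the column-determinacy property for the triangulation rather than for trapezoids: one must be careful that "adjacency in $\A(S_j)$" really is a function of the set $S_{\le j}$ together with which element is $s_j$ and which is $s_i$, with no further dependence on the order among $s_1,\ldots,s_{j-1}$. This is immediate once one recalls that $\A(S_j)$ depends only on the \emph{set} $S_{\le j}$, so the edge $s_is_j$ is present in $\A(S_j)$ iff it is present in the Delaunay triangulation of $S_{\le j}$ and this status is order-independent; the same reasoning covers the 3D convex-hull case with "edge of the triangulation" replaced by "edge of the hull facet structure." Everything else is a transcription of the earlier argument, so no new calculation is required.
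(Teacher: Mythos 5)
Your proposal mirrors the paper's proof: define $X_{i,j}$ via edge-adjacency in $\A(S_j)$, note $D_j=\sum_{i<j}X_{i,j}$ (the degree/incident-triangle identity), verify the column-determinacy that makes Equations~(\ref{eq:conditioning-uninformative})--(\ref{eq:deter-events-ci}) hold, and invoke Theorem~\ref{thm:inductive-chernoff} with $T=\beta n$. The only difference is that you spell out the verification of the combinatorial prerequisites and the final $\beta$-tuning, which the paper leaves implicit; the approach and every substantive step coincide.
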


\subsubsection{Improved Point Location Bound}
To improve the query time bound of the point location we need to modify our arguments of Section~\ref{sec:path-length}.
We discuss the, more technical, radial-search method.
The boundary priority of a triangle $\Delta=pqr$ is $b(\Delta):=\max\{\pi(p),\pi(q),\pi(r)\}$.
Since a sequence $\Delta_i$ of (Delaunay) triangles on a root-to-leaf path of the history DAG has strictly increasing $b(\Delta_i)$, we will again argue over all possible monotonic index sequences.
Let $\sigma=(b_0,\ldots,b_\ell)$ be a monotonic sequence of index values.
The search cost for a DAG path with index sequence $\sigma$ is at most
\begin{align}   \label{eq:radial-cost}
\sum_{i=1}^\ell \underbrace{X_{b_{i-1},b_i} \cdot \Big\lfloor 1+ \log_2 \big( D_{b_i} \big) \Big\rfloor}_{\blue := Y_i} ~.
\end{align}
We have 
$\avg[X_{i,j} \cdot D_j]=\avg[X_{i,j}] \avg[ D_j | X_{i,j}]=\avg[X_{i,j}] \avg[ D_j ]$, since the choice of $s_i$ from $S_j\setminus \{s_j\}$ does not depend on which elements are `adjacent' to $s_j$.
{ \blue
To prove a high probability bound for the cost~(\ref{eq:radial-cost}), we argue as in the proof for Equation~\ref{lem:whp-fixed-sequence}.
Conditioning $\avg \left[e^{t Y_1 }\right]$ on a suffix, we have for $t:= \ln 2$ that
\begin{align*}
&\avg\left[~\exp \big( t X_{1,b_1} \lfloor 1 + \log_2 D_{b_1} \rfloor \big)~\big|~s_{b_1+1},\ldots, s_{n}~\right]
\\
\leq 
&1 \cdot e^0 +  
\Pr[~X_{1,b_1}~|~s_{b_1+1},\ldots, s_{n}~] \cdot 
\avg \left[~\exp( t (1+ (\ln D_{b_1})/\ln 2)) ~\big|~X_{1,b_1}, s_{b_1+1},\ldots, s_{n}~\right]
\\
= &1 + \Pr[~X_{1,b_1}~|~s_{b_1+1},\ldots, s_{n}~] \cdot 2 \cdot \avg\left[~D_{b_1} ~\big|~X_{1,b_1}, s_{b_1+1},\ldots, s_{n}~\right]
=1+\O(1)/b_1
\quad.
\end{align*}
Thus $\avg[~\exp ( tY_1 + \ldots + tY_\ell)~] \leq {\textstyle \prod_{i=1}^\ell (1+\O(1)/b_i) \leq \exp  \sum_{i=1}^\ell \O(1)/b_i }\leq e^{\O( \ln n)} = n^{\O(1)}$ for any fixed sequence $\sigma$.
}
The remaining arguments in Section~\ref{sec:path-length} require no modification, beside using an appropriate constant $\gamma$, such that $|\A(S)|\leq \gamma n$, for triangulations. 
As a result we get:

\begin{corollary}
    Worst-case point location cost in the History of $2$D Delaunay Triangulations and $3$D Convex Hulls is w.h.p. $\O(\log n)$.
\end{corollary}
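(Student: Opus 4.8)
The plan is to carry over the argument of Section~\ref{sec:path-length} almost verbatim, replacing ``number of face transitions on a root-to-leaf path'' by ``radial-search cost on a root-to-leaf path''. A root-to-leaf path of the radial-search DAG visits Delaunay triangles whose boundary priorities $b(\Delta)=\max\{\pi(p),\pi(q),\pi(r)\}$ strictly increase along the path; let $\sigma=(b_0,\dots,b_\ell)\in I(n)$ be their distinct values. The step from boundary priority $b_{i-1}$ to $b_i$ is one radial binary search in the final CCW list (of length $D_{b_i}$) attached to the triangle deleted when $s_{b_i}$ was inserted, and it can occur only if $s_{b_{i-1}}$ and $s_{b_i}$ are adjacent in $\A(S_{b_i})$, i.e.\ only if $X_{b_{i-1},b_i}$ occurs. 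Hence the path costs at most $W_\sigma:=\sum_{i=1}^{\ell}X_{b_{i-1},b_i}\lfloor 1+\log_2 D_{b_i}\rfloor$, so the worst-case point-location cost is at most $\max_{\sigma\in I(n)}W_\sigma$, and it remains to give a high-probability bound via a union bound over $I(n)$, split as in Section~\ref{sec:path-length} into short ($\ell\le\beta'H_n$) and long ($\ell>\beta'H_n$) sequences.

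The new ingredient is a per-sequence tail bound $\Pr[W_\sigma\ge\beta\ln n]\le n^{\O(1)-\Omega(\beta)}$, the analogue of Eq.~(\ref{lem:whp-fixed-sequence}). The events $X_{b_{i-1},b_i}$ lie in $\X(\sigma)$, hence are mutually independent, and each carries the weight $\lfloor 1+\log_2 D_{b_i}\rfloor$, which is a function of the ``row data'' $(S_{\le b_i},s_{b_i})$ of step $b_i$ alone. Fixing a constant $\lambda>1$ with $1+\log_2\lambda<2$ and writing $\lambda^{W_\sigma}=\prod_i\bigl(1+X_{b_{i-1},b_i}(\lambda^{\lfloor1+\log_2 D_{b_i}\rfloor}-1)\bigr)$, the aim is to show $\avg[\lambda^{W_\sigma}]\le n^{\O(1)}$: morally, conditioning so that the weights are frozen while the $X$-events stay product-form with $\avg[X_{b_{i-1},b_i}\mid D_{b_i}]=D_{b_i}/(b_i-1)$ reduces this to $\prod_i\bigl(1+\tfrac{1}{b_i-1}\avg[D_{b_i}(\lambda^{\lfloor1+\log_2 D_{b_i}\rfloor}-1)]\bigr)$, and then $D_{b_i}(\lambda^{\lfloor1+\log_2 D_{b_i}\rfloor}-1)\le\lambda D_{b_i}^{\,1+\log_2\lambda}$ together with the standard backward-analysis moment bound $\avg[D_{b_i}^{\,1+\log_2\lambda}]=\O(1)$ keeps every factor at $1+\O(1/b_i)$, so that $\avg[\lambda^{W_\sigma}]\le\exp\bigl(\O(1)\sum_i 1/b_i\bigr)\le\exp(\O(H_n))=n^{\O(1)}$, independent of $\sigma$; Markov's inequality then yields the claim.

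The remaining bookkeeping is as in Section~\ref{sec:path-length}: a root-to-leaf path forces $F_\sigma$, so long sequences contribute at most $\sum_{|\sigma|=\ell}\Pr[F_\sigma]<(\delta eH_n/\ell)^\ell=\O(1/n^{\beta'})$ per length $\ell$ by Observation~\ref{obs:overcounting} and Stirling, summed over fewer than $n$ values, while short sequences contribute at most the per-sequence bound $n^{\O(1)-\Omega(\beta)}$ combined with $\sum_{\sigma\in I(n)}\Pr[F_\sigma]=\O(n^\delta)$ from Lemma~\ref{lem:nof-feasible}; a large enough constant $\beta$ makes the total $1/n^{\Omega(\beta)}$. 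The only geometry-specific inputs are $|\A(S)|\le\gamma n$ for triangulations, the identity $D_j=\sum_{i<j}X_{i,j}$ (so the analogue of Lemma~\ref{lem:relation-events-destroyed} is trivial and $\avg[X_{i,j}]\le\delta/j$ follows), the independence of $\X(\sigma)$ (whose proof uses only that $X_{i,j}$ is determined by $S_{\le j},s_j,s_i$), and the bounded moments $\avg[D_j^{c}]=\O(1)$; all hold for 2D Delaunay triangulations and, with the customary minor adjustments, for the radial-search history of 3D convex hulls, so the bound applies to both.

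The step I expect to be the main obstacle is the per-sequence tail. Unlike Theorem~\ref{thm:inductive-chernoff}, each summand of $W_\sigma$ carries the random weight $\lfloor1+\log_2 D_{b_i}\rfloor$, so one cannot simply multiply moment generating functions, and the naive conditional-expectation chaining breaks down: the useful moment bound $\avg[D_j^{c}]=\O(1)$ holds only in full expectation, not after conditioning on the insertion set $S_{\le j}$ (where a bad configuration can make $D_j$ as large as $\Theta(j)$), and along a sequence the degrees $D_{b_1},\dots,D_{b_\ell}$ need not be independent. Making the heuristic above rigorous therefore requires either extending the independence lemma for $\X(\sigma)$ to the weighted sum (so that, after conditioning on the $D$-values, the $X$-events remain product-form) or a dyadic decomposition that splits the sum by the value of $\lfloor1+\log_2 D_{b_i}\rfloor$ and controls each scale separately; it is also the choice of $\lambda$ close enough to $1$ that the small moments of $D_j$ alone suffice which pins down how large $\beta$ has to be.
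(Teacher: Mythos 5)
Your proposal tracks the paper's own argument for this corollary very closely: same reduction to $W_\sigma=\sum_i X_{b_{i-1},b_i}\lfloor 1+\log_2 D_{b_i}\rfloor$ over monotone index sequences with $b(\Delta)=\max\{\pi(p),\pi(q),\pi(r)\}$, same per-sequence Chernoff bound followed by the short/long-sequence union bound of Section~\ref{sec:path-length}, and same geometry-specific inputs ($|\A(S)|\le\gamma n$, $D_j=\sum_{i<j}X_{i,j}$, independence of $\X(\sigma)$). Where you differ from the paper is only in carefulness, and the ``main obstacle'' you flag is exactly the step the paper disposes of with a single problematic line: it asserts $\avg[X_{i,j}\cdot D_j]=\avg[X_{i,j}]\,\avg[D_j\mid X_{i,j}]=\avg[X_{i,j}]\,\avg[D_j]$, applies Jensen, and then declares that this ``allows us to prove a high probability bound, analogue to the proof of Equation~(\ref{lem:whp-fixed-sequence}).'' Your version of the identity, $\avg[X_{b_{i-1},b_i}\mid D_{b_i}]=D_{b_i}/(b_i-1)$, is the accurate one: conditioning on $(S_{\le j},s_j)$ fixes $D_j$ and gives $\Pr[X_{i,j}\mid S_{\le j},s_j]=D_j/(j-1)$, so unconditionally $\avg[X_{i,j}D_j]=\avg[D_j^2]/(j-1)$ --- a size-biased second moment, not the product $\avg[X_{i,j}]\avg[D_j]$. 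So your worry that the conditional-expectation chaining of Theorem~\ref{thm:inductive-chernoff} does not simply pass through the random, degree-dependent weights, and that the $D_{b_i}$ along a sequence are neither independent nor controlled by conditional moment bounds, is well-founded: it is precisely the part the paper leaves to the reader, and neither your sketch nor the paper closes it. In short, this is the same route as the paper, carried out more carefully, with the MGF chaining for the weighted sum as the remaining (shared) gap.
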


To obtain a Las Vegas verifier, we now describe a simple addition to the RIC that tracks this upper bound (on the radial search costs) as weighted path lengths in the History DAG over the course of the construction, i.e. from $\A(S_{j-1})$ to $\A(S_j)$.
We associate the weight $\lfloor\log_2 D_j \rfloor+1$ to every outgoing pointer of the CCW list of $s_j$ to the incident triangles of $\A(S_j)$.

For computing the maximum cost of an edge weighted path to the new triangles in $\A(S_j)$, we simply propagate the maximum cost from those triangles $\Delta$ that are deleted from $\A(S_{j-1})$ in the course of triangle replacements that create the CCW list, i.e. the replacements $(\Delta''', \Delta'')$ of the entry $\Delta'$ store the maximum of $\Delta$ and $\Delta'$.
Eventually, aforementioned weight of $\lfloor\log_2 D_j \rfloor+1$ is added to value of each of the triangles in the final CCW list of $s_j$.

\begin{corollary}
    Point location in the History DAGs of $2$D Delaunay Triangulations and $3$D Convex Hulls can be made worst-case $\O(\log n)$ by merely increasing the expected construction time by an additive constant.
\end{corollary}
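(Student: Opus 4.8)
The plan is to combine the high-probability bound from the previous corollary (worst-case point location cost is w.h.p. $\O(\log n)$) with the standard rebuilding trick, where the only new ingredient is the bookkeeping scheme sketched just before the statement. First I would argue that the per-step cost of the extra bookkeeping is $\O(D_j)$ amortized: creating the CCW list of $s_j$ from the deleted triangles of $\A(S_{j-1})$ already touches $\Theta(D_j)$ entries during the edge-flip scan, and each replacement $(\Delta''',\Delta'')$ of an entry $\Delta'$ merely propagates a single $\max$ of two integers, so the additional work is a constant per touched entry; adding the weight $\lfloor\log_2 D_j\rfloor+1$ to each of the $D_j$ final entries is likewise $\O(D_j)$. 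Since $\sum_j \avg[D_j] = \O(n)$, the total expected overhead of maintaining the weighted longest-path value is $\O(n)$, i.e. an additive constant per site in expectation.

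Next I would describe the verifier/rebuild loop. After running the RIC once (in expected $\O(n\log n)$ time), we read off the maximum edge-weighted root-to-leaf path length $L$ maintained by the bookkeeping, which by construction is an upper bound on the worst-case radial-search cost of any query. If $L \leq c\log n$ for the threshold constant $c$ from the previous corollary, we keep the structure; otherwise we discard it and rerun the RIC with a fresh random permutation. Each attempt costs expected $\O(n\log n)$ and succeeds with probability $1-1/n^{\Omega(1)} \geq 1/2$ for $n$ large (and trivially we can cap the number of attempts, or fall back to any deterministic $\O(n\log n)$-size, $\O(\log n)$-query structure after $\O(1)$ failures, to keep the worst case clean). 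Hence the expected number of attempts is $1 + o(1)$, the expected total construction time is $\O(n\log n)$, and the resulting structure deterministically guarantees $\O(\log n)$ worst-case point location: the returned $L$ is a certified bound, so no query can exceed it.

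Finally, to land exactly the phrasing ``increasing the expected construction time by an additive constant'' (rather than a constant factor), I would be slightly more careful: the dominant $\O(n\log n)$ term is already present in the base RIC, and the verifier itself only needs to read the single stored value $L$ (that is $\O(1)$ extra), while the bookkeeping adds expected $\O(n) = \O(n\log n)\cdot\frac{1}{\log n}$, i.e. lower-order, and the probability of a rerun is $o(1/n)$ so its contribution to the expectation is $o(1)$. The main obstacle is the accounting in this last step: one must check that the bookkeeping genuinely piggybacks on work the RIC already does (the $\Theta(D_j)$ CCW-list construction) so that it is truly subsumed, and that the failure probability from the previous corollary is small enough — inverse polynomial, not merely constant — for the rerun cost to vanish in expectation; both follow directly from the cited results, so the corollary is essentially a bookkeeping observation on top of the high-probability path bound.
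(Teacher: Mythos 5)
Your proposal is correct and takes the same approach the paper implies: the paper gives exactly the bookkeeping scheme you describe (propagating the edge-weighted longest root-to-leaf cost during the RIC's CCW-list updates) and relies on the verify-and-rebuild pattern you flesh out. Your accounting of the ``additive constant'' is the one delicate point and you handle it correctly --- the $\O(D_j)$ bookkeeping per step is genuinely subsumed in the RIC's existing edge-flip scan (a constant per touched entry, not merely a lower-order term), so the only new expected cost is the rebuild, which is $\O(n\log n)/n^{\Omega(\beta)} = o(1)$ once $\beta$ is chosen large enough in the preceding high-probability corollary.
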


\subsection*{Conclusion and Future Work}
We introduced a simple analysis technique that gives improved tail estimates for the size of the search structures from RIC. %
High probability bounds, though of general interest for dynamic maintenance, were unknown.
Consequently, we provided the insight that history-based RIC of the trapezoidation gives \emph{with high probability} a search DAG of $\O(n)$ size and $\O(\log n)$ query time in $\O(n \log n)$ time.
It has been shown recently that conflict-graph RIC cannot achieve the time bound in the worst-case.

Our technique also gives novel, high probability bounds for the combinatorial path length, which eluded the geometric backward analysis technique.
This allowed us to prove a conjecture on the depth of the search DAG for point location in planar subdivisions and a long-standing conjecture on the query time in the search DAGs for 2D nearest-neighbor search and extreme point search in 3D convex hulls.
Consequently, we provided the insight that identical high probability bounds on size, query time, and construction time, hold for these search DAGs.

The new algorithms we obtain from this are \emph{remarkably simple} Las Vegas verifiers for history-based RIC that give worst-case optimal DAGs, i.e. linear size \emph{and} logarithmic search cost.

For future work, we are interested in a finer calibration of the constants (e.g. if only polynomial decay is needed) and extensions of the technique for other history-based RICs with super-quadratic worst-case size (e.g. Delaunay tessellations and convex hulls in higher dimensions).

\subsection*{Acknowledgments}
The authors want to thank Sasha Rubin for collaborating on Observation~\ref{obs:overcounting}, Wolfgang Mulzer, Yanheng Wang and Daniel Zhang and for pointing out mistakes, Boris Aronov for discussions during his stay, Daniel Bahrdt for the github project \verb-OsmGraphCreator-, and Raimund Seidel for sharing his excellent lecture notes on a CG course he thought 1991 at UC Berkeley.

\newpage
\bibliographystyle{alpha}
\bibliography{references}

\end{document}